 \newcommand{\be}{\begin{equation}}
\newcommand{\ee}{\end{equation}}
\newcommand{\ba}{\begin{eqnarray}}
\newcommand{\ea}{\end{eqnarray}}
\newcolumntype{H}{>{\setbox0=\hbox\bgroup}c<{\egroup}@{}}
\definecolor{myrefcolor}{rgb}{0.067,0.5,0.5}
\definecolor{myurlcolor}{rgb}{0.1,0,0.9}
\definecolor{antonio}{rgb}{.2,.5,.1}
\newcommand{\antonio}
[1]{{\color{antonio}$\big[\![$ \raisebox{.7pt}{Antonio}\!\!\:\raisebox{-.7pt}{}: \textit{#1}$\ ]\!\big]$}}
\definecolor{len}{rgb}{.8,.5,.1}
\newcommand{\nx}{n_x}
\newtheorem{theorem}{Theorem}
\newtheorem{definition}{Definition}
\newtheorem{lemma}{Lemma}
\newtheorem{corollary}{Corollary}
\newtheorem{problem}{Problem}
\newcommand{\Pf}{\operatorname{Pf}}
\newcommand{\fu}{Dahlem Center for Complex Quantum Systems, Freie Universit\"{a}t Berlin, 14195 Berlin, Germany}
\begin{document}

\title{PAC-learning of free-fermionic states is \class{NP}-hard}

\author{Lennart Bittel}
\affiliation{\fu}

\author{Antonio Anna Mele}
\affiliation{\fu}

\author{Jens Eisert}
\affiliation{\fu}

\author{Lorenzo Leone}
\affiliation{\fu}
\begin{abstract}
Free-fermionic states, also known as Gaussian states or states prepared by matchgate circuits, constitute a fundamental class of quantum states due to their efficient classical simulability and their crucial role across various domains of physics. With the advent of quantum devices, experiments now yield data from quantum states, including estimates of expectation values. 
We establish that deciding whether a given dataset, formed by a few Majorana correlation functions estimates, can be consistent with a free-fermionic state is an \texttt{NP}-complete problem. Our result also extends to datasets formed by estimates of Pauli expectation values. This is in stark contrast to the case of stabilizer states, where the analogous problem can be efficiently solved. Moreover, our results directly imply that free-fermionic states are computationally hard to properly $\PAC$-learn, where PAC-learning of quantum states is a learning framework introduced by Aaronson. Remarkably, this is the first class of classically simulable quantum states shown to have this property. 
\end{abstract}

\maketitle

\section{Introduction}
Recent years have witnessed a significant advancement in the actual realization of quantum devices~\cite{Bluvstein_2023,Kim2023}, accompanied by a surge in the number of experiments conducted on them. Data obtained from these quantum experiments are now starting to be utilized as benchmarks for numerical simulations~\cite{Kim2023,begušić2023fast,liao2023simulation,rudolph2023classical,Begu_i__2024,patra2023efficient}; at the same time, these developments come along with challenging demands concerning the benchmarking and verification of those quantum devices \cite{Eisert_2020,PRXQuantum.2.010201}.

In this context, a natural question arises: can collected data be consistent with a particular class of quantum states? For instance, one might wonder whether some collected expectation values estimates could be consistent with a state within a classically tractable class, such as matrix product states~\cite{Cirac_2021}, stabilizer states~\cite{gottesman1998heisenberg}, or free-fermionic quantum states~\cite{Surace_2022}. This consideration might be driven by the desire to efficiently perform classical post-processing  with a state that matches such data and try to predict unmeasured quantities~\cite{Aaronson_2007,rocchetto_2019_PAC_experiment}. Alternatively, one might seek to validate, based on the obtained data, that a prepared quantum state does not belong to a classically tractable class of states~\cite{Oliviero_2022}. This motivates the problem of determining whether there exists a state within a fixed class which can be consistent with a given dataset composed of a few expectation value estimates.
Previous research has tackled this problem for stabilizer states and matrix product states~\cite{rocchetto2018stabiliser,yoganathan2019condition}, demonstrating that with respect to meaningful observable expectation values, such as Pauli expectation values, it is possible to efficiently decide whether a given dataset is consistent with the class of states or not. However, the analogous problem for the relevant class of free-fermionic quantum states remains open to date, and our work addresses this gap.

Free-fermionic quantum states~\cite{Surace_2022,BravyiFermions} are a fundamental class of states with applications across various domain of physics~\cite{Baxter:1982zz,Kitaev_2006,Echenique_2007,schrieffer2018theory}, especially in condensed matter physics and quantum chemistry. They are at the basis of the successful Hartree-Fock method and \emph{density-functional theory} (DFT)~\cite{RevModPhys.87.897,Martin_2004,giuliani2008quantum}, which have been at the forefront of computational calculations in solid-state physics and quantum chemistry. In quantum computing, free-fermionic quantum states are are recognized mainly because they can be simulated efficiently classically~\cite{Terhal_2002,Valiant,Jozsa_2008}. They are also referred to as fermionic Gaussian states~\cite{Surace_2022} or matchgates states~\cite{Terhal_2002,Valiant}.

\begin{figure}
\centering
\includegraphics[width=0.65\textwidth]{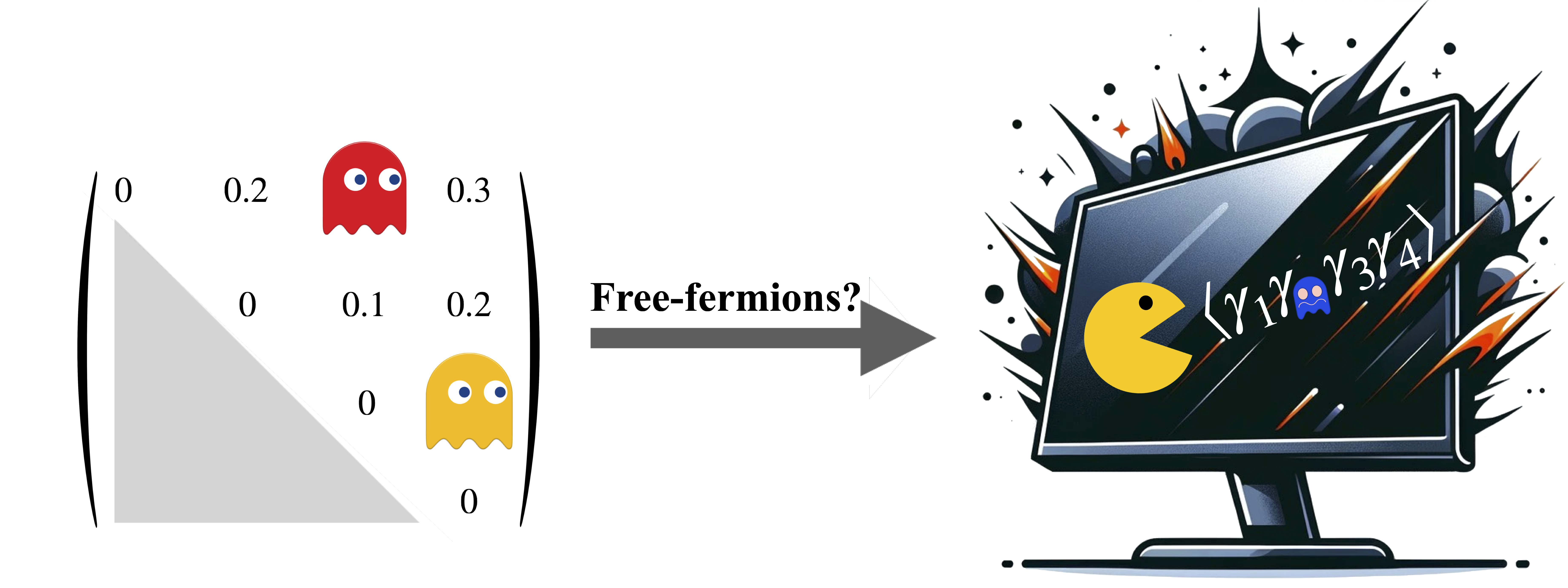}

\caption{Given partial information about the $2$-body \emph{correlation matrix} and a few $4$-body correlation functions $\langle \gamma_{i}\gamma_{j}\gamma_{k}\gamma_{l}\rangle$, determining whether there exists a free fermionic state consistent with the provided data is NP-complete. }
\label{Fig:circ}
\end{figure}

In this work, we show that deciding whether a given dataset, consisting of a few Majorana correlation functions estimates, can be consistent with a free-fermionic state is an \texttt{NP}-complete problem. 
Majorana correlation functions are significant and, presumably, the most natural expectation values to probe in the fermionic context~\cite{Naldesi_2023,denzler2023learning}, akin to Pauli expectation values in the qubit setting~\cite{Naldesi_2023,denzler2023learning}. Even when restricting datasets to include only few $2$- and $4$-body correlation function estimates, we show that this decision problem remains \texttt{NP}-complete. Leveraging the Jordan-Wigner mapping~\cite{Terhal_2002}, our findings extend to datasets containing estimates of Pauli expectation values. Our results also have implications for quantum learning theory, namely we solve the problem of understanding the computational complexity of `proper PAC-learning'~\cite{anshu2023survey,Aaronson_2007,rocchetto2018stabiliser} of free-fermionic quantum states, that we detail in the following subsection.

The intuition behind why the problem we study is \texttt{NP}-hard can be explained as follows. An $n$-mode free fermionic state is uniquely characterized by its correlation matrix, a $2n\times 2n$ matrix where each entry corresponds to a specific $2$-body correlation function. Thus, knowing the value of a $2$-body correlation function is equivalent to specifying an element of this matrix. However, higher-order correlation functions are known to impose nonlinear constraints on the elements of the correlation matrix (due to so-called Wick's theorem~\cite{Surace_2022}). This implies that deciding whether a given dataset, comprising of a few $2$-body and higher-order correlation functions, can be attributed to a free-fermionic state is equivalent to asking whether a $2n\times 2n$ matrix, with only a few of its elements fixed, can be completed so that it satisfies the prescribed nonlinear constraints. If higher-order correlation functions are not included in the constraints, one does not arrive at computationally difficult problems. 
In essence, our free-fermionic consistency problem can be viewed as a \emph{matrix completion} problem with nonlinear constraints, and it is well known that nonlinear optimization problems are prone to be \texttt{NP}-hard~\cite{arora_barak_2009}. We formally establish the \texttt{NP}-hardness of our problem through a polynomial reduction to the well-known and notoriously hard 3-SAT problem~\cite{arora_barak_2009}.
\subsubsection{Implication for PAC-learning of quantum states}
{The realization of increasingly complex quantum states in quantum devices has spurred significant interest in the problem of recovering properties of quantum states from collected measurements~\cite{anshu2023survey,Huang_2020,Kliesch_2021}—a task that can be analyzed through the lens of learning theory~\cite{anshu2023survey}.
Aaronson's seminal contribution~\cite{Aaronson_2007} introduced the paradigm of \emph{probably approximately correct} (PAC) learning quantum states, motivated by the practical observation that experimenters often cannot arbitrarily choose observables and must instead operate within constraints imposed by the available measurements.} In this framework, the primary objective is to predict unseen expectation values, based on a training dataset of few known expectation values. More precisely, this \textit{training set} consists of pairs of  \emph{positive operator-valued measure} (POVM) elements and their corresponding expectation values, drawn from a distribution over the POVM elements. The goal is to accurately predict the expectation values of unseen POVM elements that share the same underlying distribution.
Remarkably, Aaronson has demonstrated a surprising result: the entire class of $n$-qubit quantum states can be PAC-learned using training sets of size $O(n)$, where $n$ is the number of qubits. This result has subsequently been 
made use of in an experimental demonstration 
on a photonic device~\cite{rocchetto_2019_PAC_experiment}. 
The proposed algorithm~\cite{Aaronson_2007} involves generating a \emph{hypothesis} quantum state $\sigma$, which is consistent with all the expectation values in the training set within a given margin of error. However, while Aaronson's result achieves sample efficiency, it does not resolve the question of whether time-efficient PAC-learning of specific classes of quantum states is feasible, relating to notions of computational complexity. It is worth noting that the task of finding such a state $\sigma$ consistent with the given data can be cast as a \emph{semi-definite program}  (SDP)~\cite{aaronson2023efficient,rocchetto2018stabiliser} that, for general quantum states, can only be solved in $\exp(O(n))$ time. Time efficient PAC-learning can be achieved if the state $\sigma$ can be found in polynomial time and expectation values on unseen observables can be computed in polynomial time. For instance, it has been demonstrated that classes such as stabilizer states~\cite{rocchetto2018stabiliser} and states with low entanglement~\cite{yoganathan2019condition} can be time-efficiently PAC-learned, with respect to meaningful class of observables (associated with their respective POVMs).
However, the problem remains open for free-fermionic quantum states~\cite{BravyiFermions} -- constituting another noteworthy class of quantum states that can also be efficiently simulated classically~\cite{knill2001fermionic,Terhal_2002,Surace_2022}, akin to stabilizer states and low Schmidt rank states. Throughout our presentation, we will primarily refer to them as fermionic Gaussian states.

In this work, we show that ``proper"-PAC learning of fermionic Gaussian states is computationally difficult in the sense of being \texttt{NP}-hard, when considering POVM elements associated with Majorana observables. Here, ``proper" means that the hypothesis state $\sigma$ must belong to the same class (e.g., the class of fermionic Gaussian states) as the state from which the training set is originated. 

{
As discussed in Ref.~\cite{Liang_2023}, proper PAC-learning is a natural framework for PAC-learning a classically simulable class of quantum states, since it ensures time-efficient prediction of new (unseen) data while guaranteeing the existence of a valid hypothesis consistent with the training data.
However, requiring the hypothesis to belong to the same class (in this case, fermionic Gaussian states) may not be strictly necessary; it might suffice for the hypothesis to enable efficient prediction.
It is also important to emphasize that our definition of proper PAC-learning, for which we demonstrate hardness, pertains specifically to Aaronson’s information-theoretic algorithm~\cite{aaronson2023efficient}, which involves finding a matching state that fits the training data set. In particular, we show that finding a free-fermionic state consistent with the training data is computationally intractable. Nonetheless, there could, in principle, exist an alternative algorithm—unrelated to finding a matching state—that enables efficient generalization on a given training set drawn from a given probability distribution.
}

As mentioned in the previous section, we also address a substantially broader problem than proper PAC-learning, which we refer to as the $\class{Gaussian\,Consistency}$ problem, defined as the problem of deciding if there exists a fermionic Gaussian state that closely approximates a training set comprising $\Theta(n)$ Majorana or Pauli expectation values. 
It is important to note that, in the $\class{Gaussian\,Consistency}$ problem, the training data could potentially stem from a non-Gaussian state, unlike the proper PAC-learning scenario where the data must originate from a Gaussian underlying state. 
Through a straightforward reduction from the $\class{Gaussian\,Consistency}$ problem to the proper PAC-learning problem, we demonstrate the \texttt{NP}-hardness of the latter. 
\subsubsection{Different notions of learning}
{Beyond PAC-learning, which is the primary focus of this paper, various other learning frameworks have been explored in the literature, such as quantum state tomography~\cite{Haah_2017,odonnell2015quantum}, property testing~\cite{montanaro2018surveyquantumpropertytesting}, and shadow tomography~\cite{Aaronson_2007,Huang_2020}.}

Quantum state tomography involves determining, through measurements on copies of an unknown quantum state, a classical description of a quantum state that closely approximates the unknown target state within the trace distance metric~\cite{anshu2023survey}. Performing quantum state tomography without prior assumptions about the state to be learned requires an exponential number of copies and computational time scaling with the system size~\cite{anshu2023survey}. However, if one possesses prior assumptions, quantum state tomography may still be performed efficiently, as demonstrated for several classes of states~\cite{Cramer_2010,lanyonEfficientTomographyQuantum2017,fanizza2023learning,rouzé2023learning,huang2024learning,montanaro2017learning,arunachalam2023optimal,grewal2023efficient,leone2023learning,hangleiter2024bell,aaronson2023efficient,Gluza_2018,mele2024efficient}. Once a state has been recovered by means of quantum state tomography, one can then make predictions for all observables (with bounded operator norm).
{Quantum state tomography~\cite{anshu2023survey,Eisert_2020} and PAC learning~\cite{Aaronson_2007} address fundamentally different tasks, as discussed in Ref.~\cite{Aaronson_2007}. In quantum state tomography, the experimenter has the freedom to choose measurements in order to reconstruct the unknown state. In contrast, PAC learning assumes a fixed set of measurements and focuses on predicting certain properties of the quantum state, rather than directly reconstructing it.}
More specifically, PAC learning of quantum states, as introduced by Aaronson~\cite{Aaronson_2007}, focuses on predicting unseen expectation values based on a limited number of observed expectation values provided in a training set. Aaronson demonstrated that finding a state that accurately matches $O(n)$ expectation values from a given training set suffices to solve the PAC-learning problem. Such a state, consistent with the training data, ensures robust generalization in predicting unseen expectation values originating from the same underlying distribution as the training set. However, finding such a consistent state can be computational difficult depending on the analyzed class of states.
As discussed in the previous section, in the context of ``proper PAC-learning," the objective is to find a hypothesis state that not only matches the training data but also belongs to the same class as the unknown state from which the training data were generated. 

Crucially, with the prior assumption that the unknown state is a fermionic Gaussian state, quantum state tomography can be performed efficiently in sample and time complexity~\cite{Gluza_2018,aaronson2023efficient,ogorman2022fermionic,mele2024efficient}. Specifically, to learn an $n$-qubit fermionic Gaussian state, it is sufficient to measure all the $O(n^2)$ $2$-body correlation functions that constitute the entries of the correlation matrix. However, it is important to note that in the proper PAC-learning setting, access to all $2$-body expectation value estimates is not guaranteed -- the given training set can be composed by few $2$- body, but also higher order correlations.
By showing that proper PAC learning of free-fermionic states is hard with respect to Majorana observables (which is a well-motivated class of observables in the fermionic context), we also refute the idea that PAC learning is simpler than quantum state tomography, which is interesting since the notion of PAC learning of quantum states has been introduced by Aaronson~\cite{Aaronson_2007} as a simplification of tomography. This stands in contrast to stabilizer states and matrix product states, where both tomography and proper PAC learning can be accomplished with time efficiency with respect to `natural' observables~\cite{rocchetto2018stabiliser,yoganathan2019condition}.

{Another related learning framework is property testing of quantum states~\cite{montanaro2018surveyquantumpropertytesting}. In property testing, the goal is to determine whether a given state belongs to a specified class, such as the class of fermionic Gaussian states, or is significantly distant from such a class, based on measurements performed on copies of the state. 
It might also be relevant to emphasize that while our proper PAC-learning setting bears resemblance to the problem of property testing fermionic Gaussian states, it is not the same task. In contrast to property testing, in our proper PAC-learning scenario, the learner cannot choose the measurements to perform; the provided dataset is fixed and not chosen by the learner, and the problem is purely computational.
Remarkably, in stark contrast with proper PAC-learning, the problem of property testing pure of low-rank fermionic Gaussian states is sample and computationally efficient, as shown in Ref.~\cite{Bittel2024testing}. Moreover, property testing of general mixed Gaussian states remain sample and thus computationally inefficient~\cite{Bittel2024testing}. }

{Furthermore, in recent years, the learning framework of \emph{shadow tomography} has gained significant attention~\cite{aaronson2018shadow,Huang_2020,Efficient}, where the objective is to accurately estimate a set of expectation values by performing measurements on the unknown quantum state. Specifically, methods addressing the shadow tomography problem for fermionic systems have been explored~\cite{Zhao_2021,low2022classical,wan2023matchgate,denzler2023learning}. These approaches typically involve collecting samples by applying randomly drawn fermionic Gaussian unitary transformations to the unknown state, with the resulting data enabling the estimation of the quantum state properties, including fermionic correlation functions and the correlation matrix of the state. This, in turn, can provide sufficient information to determine whether the state is free-fermionic or significantly far from being free-fermionic (i.e., a solution to the property testing problem previously mentioned; see Ref.~\cite{Bittel2024testing} for further details).
However, this task fundamentally differs from PAC-learning: in shadow tomography, the learner has the flexibility to choose the measurements performed, whereas in PAC-learning, the learner is constrained to a fixed set of measurements.}

\subsubsection{Further related works}
While the narrative that the ``Gaussian world is simple'' is true for notions of classical simulability~\cite{Terhal_2002,cudby2023gaussian} and full state tomography in trace distance~\cite{Gluza_2018,aaronson2023efficient,ogorman2022fermionic,mele2024efficient}, 
our work further corroborates 
and substantiates the idea that this 
is not quite right for other meaningful tasks. For example, it has been shown that 
the Hartree-Fock method is \texttt{NP}-hard in worst-case complexity \cite{HFNP}, a statement that even holds true for the Hartree-Fock method applied to for translationally invariant systems
\cite{Whitfield_2014}. Also, tasks of
free-fermionic distribution learning~\cite{nietner2023free}
have been identified as being computationally difficult. The latter is another task that can be done efficiently for stabilizer states~\cite{Hinsche_2023}, but -- possibly strikingly -- not for free-fermionic states~\cite{nietner2023free}, breaking the often cited similarity between stabilizer states and free fermions. Our result is also reminiscent to the situation of families of \emph{tensor networks} that can be efficiently stored, yet certain important tasks of optimization~\cite{PhysRevLett.100.250501,PhysRevLett.97.260501} or contraction~\cite{PhysRevLett.98.140506} can be computationally hard or even undecidable~\cite{PhysRevLett.113.160503}.

Concerning PAC-learning of quantum states, Ref.~\cite{yoganathan2019condition} gave sufficient conditions for a class of quantum states to be efficiently PAC-learnable. It was shown that if a class of quantum states is classically simulable and additionally adheres to the so-called \textit{invertibility condition}, then such class can be efficiently PAC-learned. Informally, the invertibility condition requires efficiency in characterizing the states within the ansatz class that adhere all the expectation values of the training set. 
Therefore, as a corollary of our result that the class of fermonic Gaussian states are hard to PAC-learn, it is evident that fermionic Gaussian states do violate the invertibility condition.  
In contrast, both stabilizer states~\cite{rocchetto2018stabiliser} and matrix product states~\cite{yoganathan2019condition} do satisfy such invertibility conditions and, as such, they can be efficiently PAC-learned. 
Notions of PAC-learning of quantum circuits have also been defined~\cite{Caro_2020}. Ref.~\cite{Liang_2023} has shown that proper PAC-learning of Clifford circuits is time-efficient if and only if \texttt{RP}=\texttt{NP} (i.e., implying a collapse of the polynomial hierarchy). 

{The NP-hardness of proper PAC learning has been extensively studied in classical learning theory, particularly in the context of Boolean functions~\cite{KearnsVALIANT, PittValiant1988} in the 1980s and 1990s. In particular, properly learning Conjunctive Normal Form (CNF) and Disjunctive Normal Form (DNF) formulas—conjunctions of disjunctions of clauses of literals and disjunctions of conjunctions of terms of literals, respectively—has been shown to be inefficient, unless \( \text{R} = \text{NP} \)~\cite{PittValiant1988}. Specifically, this was proven for learning \( k \)-clause CNF (the class of CNF formulas with at most $k$ clauses) and \( k \)-term DNF (the class of DNF formulas with at most $k$ terms) for \( k \geq 2 \).}

\subsubsection{Organization of our work}
This work is structured as follows: Section~\ref{sec:setup} provides a concise introduction to the key definitions. In Section~\ref{sec:overview}, we present an informal summary of our results, offering simplified proof sketches. Section~\ref{sec:discussion} focuses on our conclusions and explores avenues for future research. In Section~\ref{sec:preliminaries}, we offer a comprehensive introduction to the methodologies employed in our study, covering topics such as fermionic Gaussian states, the 3-SAT problem, and the basics of PAC-learning. Finally, the more technically oriented Section~\ref{sec:proofs} delves into the detailed proofs of our results.

\subsection{Setup}
\label{sec:setup}
{\bf Fermionic Gaussian states.} We consider a system formed by $n$-fermionic modes or, without loss of generality, an $n$-qubits system. Majorana operators are $2n$ operators which can be defined in terms of their anti-commutation relation $\{\gamma_\mu,\gamma_\nu\}=2\delta_{\mu,\nu} I$,
for all $\mu,\nu \in [2n]$, where $[2n]:=\{1,\dots,2n\}$. They can also defined in terms of qubits-Pauli operators as 
\begin{equation}
\gamma_{2 k-1}=(\prod_{j=1}^{k-1} Z_j) X_k\,,\quad \, \gamma_{2 k}=(\prod_{j=1}^{k-1} Z_j) Y_k
\end{equation}
for $k \in [n]$, through the Jordan-Wigner transformation. They are Hermitian, traceless, square to the identity, and different Majorana operators anti-commute. Moreover, the ordered product of them form an orthogonal basis with respect to the Hilbert Schmidt scalar product for the space of linear operators. Given a quantum state $\rho$, we denote as its \emph{$k$-th order correlation functions} the expectation values 
\begin{equation}
\langle\gamma_S\rangle_{\rho}\coloneqq i^{-|S|/2}\tr(\gamma_{S}\rho), 
\end{equation}
where $\gamma_{S}\coloneqq\gamma_{\mu_1}\cdots\gamma_{\mu_{|S|}}$ with $S\coloneqq \{\mu_1,\ldots, \mu_{|S|}\}\subset[2n]$ and $1 \le \mu_{1}<\dots <\mu_{|S|}\le 2n$. An operator is referred to as a $k$-body \emph{Majorana operator}  (or product) if it can be expressed as the product of $k$ Majorana operators.  Let $\mathcal{S}$ be a set of ordered indices subsets $S\subset[2n]$. We identify $\mathcal{S}$ with a set of Majorana product operators via the mapping 
\begin{equation}
\mathcal{S}\ni S=\{\mu_1,\ldots,\mu_{|S|}\}\mapsto \gamma_{S}=\gamma_{\mu_1}\cdots\gamma_{\mu_{|S|}}.
\end{equation}

Given a quantum state $\rho$, its 
\emph{correlation matrix} $\Gamma(\rho)$ is defined as the $2n \times 2n$ anti-symmetric real matrix with elements 
\begin{equation}
[\Gamma(\rho)]_{j,k}:=-\frac{i}{2}\Tr\left(\left[\gamma_j,\gamma_k\right]\rho\right)= \langle\gamma_{j}\gamma_k\rangle_{\rho},
\end{equation}
for all $j\le k\in [2n]$.
We say that a matrix is a valid correlation matrix if it is real and anti-symmetric, with eigenvalues in absolute value less than or equal to one. In particular, the correlation matrix of any (general) state is a valid correlation matrix.
Correlation matrices are particularly relevant because there is a bijection between fermionic Gaussian states and valid correlation matrices, which is to say that fermionic Gaussian states are uniquely determined by its $2$-body Majorana products encoded in $\Gamma(\rho)$. In particular, given a Gaussian state $\rho$, higher order correlation functions are given by Wick's theorem~\cite{Surace_2022} as 
\begin{equation}
\langle\gamma_{S}\rangle_{\rho}  =\operatorname{Pf}(\Gamma(\rho)\rvert_{S}), 
\end{equation}
for $\Gamma(\rho)\rvert_{S}$ being the restriction of the matrix $\Gamma$ to the rows and columns corresponding to elements in $S$. Here, $\operatorname{Pf}(\Gamma)$ indicates the Pfaffian of the matrix $\Gamma$, which, for anti-symmetric matrices, reads $\operatorname{Pf}(\Gamma)^2=\det(\Gamma)$ (see Section~\ref{sec:preliminaries} for more details).

\medskip

{\bf 3-SAT decision problem.} The task of determining whether an object possesses a specific property (for instance, verifying if the data align with a fermionic Gaussian state) is framed as a decision problem~\cite{arora_barak_2009}.
The complexity class \texttt{NP} encompasses decision problems efficiently verifiable in polynomial time. A problem is \texttt{NP}-hard if it is at least as computationally hard as the hardest problems in \texttt{NP}. If a problem is both in \texttt{NP} and \texttt{NP}-hard, it is classified as \texttt{NP}-complete.
In our hardness proof, we establish a reduction from our specific problem to the well-known \texttt{NP}-complete 3-SAT problem. In 3-SAT, Boolean formulas consist of clauses with three literals, which can be either Boolean variables or negations (e.g., \(x_1 \vee \neg x_2 \vee x_3\)). The objective is to find truth assignments that satisfy the formula. The problem can be formally defined as follows.

\begin{problem}[$3$-SAT]
\label{prob:3satMain}
Problem Statement:
\begin{itemize}
    \item \textbf{Instance:} Given a list of indices \(\vec i\in [\nx]^{3,n_c}\) and a negation table \(s\in \{\cdot, \lnot\}^{3,n_c}\).
    
    \item \textbf{Question:} Is there a Boolean variable assignment \(x\in\{0,1\}^{\nx}\) that satisfies all of the following clauses?
    \begin{align}
        \forall c\in[n_c]:\quad s_{1,c} x_{i_1(c)}\lor s_{2,c} x_{i_2(c)}\lor s_{3,c} x_{i_3(c)}.
    \end{align}
\end{itemize}
\end{problem}
Here, $[\nx]^{3,n_c}$ means that we take $n_c$ subsets of $3$ elements belonging to $[\nx]$.
To establish the \texttt{NP}-hardness of (proper) PAC learning of free-fermionic states, we proceed in two steps. We first show that $\class{Gaussian\,Consistency}$ decision problem, defined subsequently, reduces to the PAC-learning problem via standard complexity arguments. Second, we establish a polynomial-time reduction from the 3-SAT problem to the $\class{Gaussian\,Consistency}$ problem. This implies that efficient proper PAC learning of free-fermionic states would imply efficiently solving 3-SAT, which, in turn, would lead to the (widely not-believed) conclusion \(\class{P}=\class{NP}\).
In the problem statements presented below, the input data of the problems are represented by rational numbers, which possess finite descriptions and thus serve as viable inputs for computer programs. Allowing for irrational numbers would only make the problem more computational challenging.

\subsection{Overview of the main results} \label{sec:overview}
In this section, we unveil the main result of this work and elucidate the rationale and implications underpinning the proof. Let $\mathcal{S}$ denote a set of ordered indices $\mathcal{S}\ni S\subset[2n]$ encoding Majorana product operators, i.e., $\gamma_S$ for $S\in\mathcal{S}$. We refer to a \textit{training set} $\mathcal{T}=\{(S,\langle\gamma_S\rangle_T)\}_{S\in \mathcal{S}}$, where $\langle\gamma_S\rangle_T$ corresponds to the estimation of the \textit{true} expectation value $\langle\gamma_S\rangle_{\rho}$, given within an accuracy $\varepsilon$, i.e., $|\langle\gamma_S\rangle_{\rho}-\langle\gamma_S\rangle_{T}|\le\varepsilon$ for every $S\in\mathcal{S}$.  We formally define the \emph{proper PAC-learning problem for fermionic Gaussian states} as follows:

\begin{problem}[Proper PAC-learning of Gaussian states]\label{prob:pac_approx} 
Let $\rho$ be a Gaussian state and $\langle\gamma_S\rangle_{\rho}$ be expectation values of Majorana correlation functions for $S\in\mathcal{S}$.  Problem statement:
\begin{itemize}
    \item \textbf{Instance:} A training set $\mathcal{T}=\{(S,\langle\gamma_S\rangle_T)\}_{S\in \mathcal{S}}$ such that 
    {$\left<\gamma_S\right>_T=\langle\gamma_S\rangle_{\rho}$
    for every $S \in \mathcal{S}$.}
    \item \textbf{Output:} Provide a fermionic Gaussian state $\sigma$, characterized by its correlation matrix $\Gamma(\sigma)$, that fulfills the constraints 
    \begin{align}
    |\langle\gamma_S\rangle_{T}-\langle\gamma_S\rangle_{\sigma}| \le \varepsilon,\, \forall S\in \mathcal{S}.
    \end{align}
\end{itemize}
\end{problem}

{Two important remarks are in order regarding the definition of Problem~\ref{prob:pac_approx}. First, we consider the \textit{worst-case} output, meaning that the quantum state must match all expectation values within the specified accuracy parameter. Another commonly used notion, also adopted in Aaronson's original work~\cite{Aaronson_2007}, is to quantify the error using the \textit{average error} over the training set. Second, the definition of Problem~\ref{prob:pac_approx} assumes that the expectation values $\langle\gamma_{S}\rangle_{T}$ in the training set originate from a free fermionic state and, crucially, are exact with no error. However, in practical settings, due to shot noise statistics, the expectation values in the training set— even if derived from a free fermionic state $\rho$—will only be $\varepsilon$-close to the true values. Nevertheless, since the hardness of noisy learning follows from the hardness of the exact version, our proof establishes hardness for both cases.  }
{We emphasize once again that Problem~\ref{prob:pac_approx} requires finding a free fermionic state that is consistent with the expectation values in the training set. This is referred to as a \textit{proper} PAC learning problem, as discussed above.  
} 
Let us now present the $\class{Gaussian\,Consistency}$ problem for which we show the $\class{NP}$-completeness. Unlike the \emph{proper PAC-learning Problem}~\ref{prob:pac_approx}, for the $\class{Gaussian\,Consistency}$ problem we are provided with a training set of Majorana product expectation values that arise from a general quantum state $\rho$, not necessarily Gaussian. The question is whether there exists a fermionic Gaussian state $\sigma$ capable of accurately matching this data.
We formalize this problem, as a decision problem, as follows:
\begin{problem}[$\class{Gaussian\,Consistency}$ problem]\label{prob:GaussianConsMAIN} 
Problem statement:
\begin{itemize}
    \item \textbf{Instance:} A training set
    $\mathcal{T}=\{(S,\langle\gamma_S\rangle_T)\}_{S\in \mathcal{S}}$ and an accuracy parameter $\varepsilon>0$.
    \item \textbf{Promise:} 
    {The training set is obtained from a physical state $\rho$, where $\left<\gamma_{S}\right>_{T} = \langle\gamma_S\rangle_{\rho}$.}
    \item \textbf{Question:} Does there exist a Gaussian state $\sigma$ that matches the expectation values in the training set up to accuracy $\varepsilon$:
    \be
            |\langle\gamma_S\rangle_{T}-\langle\gamma_S\rangle_{\sigma}| \le \varepsilon,\quad \forall S \in \mathcal{S}\,\text{?}
    \ee
\end{itemize}
\end{problem}
We recall that the $2$-point (or $2$-body) correlation functions of a state constitute its correlation matrix entries, while higher-order functions introduce Pfaffian constraints because of Wick's theorem (see Sec.~\ref{sec:preliminaries} for details). Consequently, our task of identifying a fermionic Gaussian state consistent with the given data can be framed as a matrix completion problem for an anti-symmetric real matrix that involves satisfying Pfaffian constraints on submatrices along with linear constraints on the eigenvalues. The intuition behind our \texttt{NP}-hardness result is based on the fact that Pfaffian constraints are nonlinear and make the matrix completion problem more challenging than linear constraints alone. In contrast, in the stabilizer states case~\cite{rocchetto2018stabiliser}, the constraints imposed by the observed Pauli expectation values are all linear: this is the high-level intuition that explains why PAC learning of stabilizer states with respect to Pauli (Majorana) observables can be solved efficiently~\cite{rocchetto2018stabiliser}, while for Gaussian fermionic states it cannot.   
By standard argument of complexity theory, we can demonstrate that Problem~\ref{prob:pac_approx} is at least as hard as the Gaussian consistency Problem \ref{prob:GaussianConsMAIN}.

\begin{lemma}[Proper PAC learning of Gaussian states and the 
$\class{Gaussian\,Consistency}$ problem]\label{lem:equivalence}
Problem~\ref{prob:pac_approx} (proper PAC-learning) is at least as hard as Problem~\ref{prob:GaussianConsMAIN} (Gaussian consistency problem).
\begin{proof}
We can insert an instance of Problem~\ref{prob:GaussianConsMAIN} to an algorithm solving Problem~\ref{prob:pac_approx}. If the algorithm returns a valid correlation matrix such that it allows to approximate all the expectation values in the training set with accuracy $\epsilon$, we accept. If the algorithm fails or returns an invalid correlation matrix we reject. Therefore, there is a single call Cook reduction~\cite{arora_barak_2009} between the two problems. 
\end{proof}
\end{lemma}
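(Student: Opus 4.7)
The plan is to give a single-call Cook reduction from the $\class{Gaussian\,Consistency}$ decision problem (Problem~\ref{prob:GaussianConsMAIN}) to the proper PAC-learning search problem (Problem~\ref{prob:pac_approx}), which is the standard way to show that a search problem is at least as hard as an associated decision problem. The reduction I have in mind runs in the following way. Given an input instance $(\mathcal{T},\varepsilon)$ of Problem~\ref{prob:GaussianConsMAIN}, pass $\mathcal{T}$ as the training set to any purported algorithm $\mathcal{A}$ for Problem~\ref{prob:pac_approx}. Whatever $\mathcal{A}$ returns—a correlation matrix $\Gamma$, a failure symbol, or garbage—run a polynomial-time verification step and then accept or reject.

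The verification has two parts. First, check that $\Gamma$ is a \emph{valid} correlation matrix: that is, that it is real and antisymmetric and that all of its eigenvalues lie in $[-1,1]$ in absolute value. This is efficient because both properties can be tested, e.g., via a Schur/Youla decomposition of the antisymmetric matrix, in polynomial time in $n$. Second, for every $S\in\mathcal{S}$, compute $\langle\gamma_{S}\rangle_{\sigma}=\Pf(\Gamma\rvert_{S})$ via Wick's theorem and check that $|\langle\gamma_{S}\rangle_{T}-\langle\gamma_{S}\rangle_{\sigma}|\le\varepsilon$. Since $|\mathcal{S}|$ is polynomial in $n$ and each $|S|$ is bounded (in our hardness instances by $4$, but polynomial in $n$ would still work since Pfaffians are computable in polynomial time), this second step is efficient. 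If both checks succeed, accept; otherwise reject.

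Correctness follows from the two directions. If the $\class{Gaussian\,Consistency}$ instance is a YES-instance, then the training set $\mathcal{T}$ is, by definition, consistent with some Gaussian state, which puts it within the promise of Problem~\ref{prob:pac_approx}; so $\mathcal{A}$ must return a valid correlation matrix $\Gamma$ satisfying the constraints, and the verifier accepts. If the instance is a NO-instance, then no Gaussian state can match the training set within $\varepsilon$, so no output of $\mathcal{A}$ can simultaneously be a valid correlation matrix and pass the Pfaffian-based checks, and the verifier rejects. Since we invoke $\mathcal{A}$ exactly once and perform only polynomial-time postprocessing, this is a polynomial-time single-call Cook reduction.

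The main subtlety—and what I would flag as the only nontrivial point—is that Problem~\ref{prob:pac_approx} carries the promise that $\mathcal{T}$ already comes from a Gaussian state, whereas Problem~\ref{prob:GaussianConsMAIN} makes no such promise. On NO-instances we are therefore feeding $\mathcal{A}$ an off-promise input, and in principle $\mathcal{A}$ may return anything. This is why the explicit verification step is important: we do not trust $\mathcal{A}$'s output but instead use it only as a candidate witness, which we independently verify. With that safeguard the reduction goes through, giving the claimed hardness relation.
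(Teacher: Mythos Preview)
Your proposal is correct and follows essentially the same approach as the paper: feed the $\class{Gaussian\,Consistency}$ instance to the proper PAC-learner, then verify the returned correlation matrix in polynomial time and accept or reject accordingly, yielding a single-call Cook reduction. Your write-up is in fact more careful than the paper's, since you explicitly spell out the two verification steps (validity of $\Gamma$ and the Pfaffian checks via Wick's theorem) and you flag the promise mismatch---that Problem~\ref{prob:pac_approx} assumes the data come from a Gaussian state while Problem~\ref{prob:GaussianConsMAIN} only promises a physical state---and correctly explain why the independent verification step makes the reduction sound regardless of what $\mathcal{A}$ returns off-promise.
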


To establish \texttt{NP}-completeness of Problem~\ref{prob:GaussianConsMAIN}, and thus \texttt{NP}-hardness of Problem~\ref{prob:pac_approx} through Lemma~\ref{lem:equivalence}, we seek to perform a reduction of the $3$-SAT Problem~\ref{prob:3satMain} to Problem~\ref{prob:GaussianConsMAIN}. Given a $3$-SAT instance, our objective is to reformat the inputs to align with the requirements of the $\class{Gaussian \,Consistency}$ Problem~\ref{prob:GaussianConsMAIN}, such that this transformation enables us to deduce the corresponding output for the $3$-SAT problem based on the output result of the 
$\class{Gaussian\,Consistency}$ problem.

\begin{theorem}[\texttt{NP}-completeness of \(\class{Gaussian\, Consistency}\)]\label{th:1main}
    The \(\class{Gaussian\,Consistency}\) Problem \ref{prob:GaussianConsMAIN} is \(\class{NP}\)-complete, even if we restrict the dataset to $k$-body Majorana operators with $k\le 4$ and constant accuracy \(\varepsilon=O(1)\).
\end{theorem}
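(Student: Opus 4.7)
}
The plan is to prove both directions. For membership in \texttt{NP}, a witness is a rational-entry correlation matrix $\Gamma(\sigma)$ of polynomial bit-length; the existence of such a succinct witness whenever the feasible set is nonempty follows from the semi-algebraic structure of the constraints. Verification amounts to: (i) checking antisymmetry and realness of $\Gamma(\sigma)$, (ii) checking the spectrum condition $\|\Gamma(\sigma)\|\le 1$, and (iii) for each $S\in\mathcal{S}$ with $|S|\le 4$, computing the Pfaffian $\Pf(\Gamma(\sigma)|_{S})$ (a constant-sized computation) and checking that it lies within $\varepsilon$ of $\langle\gamma_{S}\rangle_{T}$. Each step runs in polynomial time, so the problem is in $\texttt{NP}$.

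\paragraph{Reduction from $3$-SAT: variable gadget.}
For \texttt{NP}-hardness I will reduce $3$-SAT to the $\class{Gaussian\,Consistency}$ problem. Given a $3$-SAT instance with $n_{x}$ variables and $n_{c}$ clauses, the construction will use $O(n_{x}+n_{c})$ Majorana modes and $O(n_{x}+n_{c})$ specified expectation values of weight at most four. The key algebraic tool is the Pfaffian identity
\begin{equation}
\Pf(\Gamma|_{\{a,b,c,d\}}) = \Gamma_{ab}\Gamma_{cd} - \Gamma_{ac}\Gamma_{bd} + \Gamma_{ad}\Gamma_{bc},
\end{equation}
which provides a bilinear relation among the entries of $\Gamma$ that can be made into a clean product equation by zeroing cross terms through 2-body constraints. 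For each Boolean variable $x_{i}$, I allocate four Majorana modes $a_{i}, b_{i}, c_{i}, d_{i}$ and specify the 2-body values $\langle\gamma_{a_{i}}\gamma_{c_{i}}\rangle = \langle\gamma_{a_{i}}\gamma_{d_{i}}\rangle = \langle\gamma_{b_{i}}\gamma_{c_{i}}\rangle = \langle\gamma_{b_{i}}\gamma_{d_{i}}\rangle = 0$ together with the single 4-body value $\langle\gamma_{a_{i}}\gamma_{b_{i}}\gamma_{c_{i}}\gamma_{d_{i}}\rangle = 1$. The Pfaffian identity then forces $\Gamma_{a_{i}b_{i}}\,\Gamma_{c_{i}d_{i}} = 1$, which combined with the spectrum bound $|\Gamma_{a_{i}b_{i}}|,|\Gamma_{c_{i}d_{i}}|\le 1$ forces both factors to equal $\pm 1$ with the same sign. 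I identify $\Gamma_{a_{i}b_{i}}=+1$ with $x_{i}=\mathrm{true}$ and $\Gamma_{a_{i}b_{i}}=-1$ with $x_{i}=\mathrm{false}$.

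\paragraph{Clause gadget.}
For each clause $\ell_{1}\vee\ell_{2}\vee\ell_{3}$, let $y_{j}\in\{\pm 1\}$ denote the literal value (equal to $\Gamma_{a_{i_{j}}b_{i_{j}}}$ or its negation depending on polarity). The clause is satisfied iff $(y_{1},y_{2},y_{3})\ne(-1,-1,-1)$, which is intrinsically a cubic constraint on the $y_{j}$. Since each 4-body Pfaffian is only a quadratic form in matrix entries, I will introduce $O(1)$ auxiliary Majorana modes per clause that serve as intermediate ``product registers'' and chain Pfaffian constraints to reduce the cubic OR to a conjunction of bilinear ones. Concretely, one introduces an auxiliary variable-gadget pair and uses a 4-body Pfaffian constraint to force its $\pm 1$ correlation to equal $y_{1}y_{2}$; then a further 4-body constraint couples this auxiliary to the third literal, producing the triple product $y_{1}y_{2}y_{3}$; finally, additional 2-body and 4-body constraints on dedicated modes pin down a quantity whose prescribed target value is achievable by a Gaussian completion if and only if $(y_{1},y_{2},y_{3})$ avoids the forbidden all-$-1$ configuration. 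This last step exploits the fact that any ``cheat'' completion must violate either the Pfaffian identities or the spectrum bound.

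\paragraph{Combining gadgets, robustness, and the main obstacle.}
Shared modes glue variable gadgets into clause gadgets, giving a training set of size $O(n_{x}+n_{c})$ over $2$- and $4$-body Majorana operators, with a reduction runnable in polynomial time. The forward direction---satisfiable 3-SAT gives a valid correlation matrix---follows by building $\Gamma$ block-diagonally from variable-gadget blocks with appropriate auxiliary values, and the converse recovers a satisfying assignment by reading off $\mathrm{sign}(\Gamma_{a_{i}b_{i}})$. To lift the argument to constant $\varepsilon=O(1)$, I use the natural robustness of the variable gadget: requiring $\Gamma_{a_{i}b_{i}}\Gamma_{c_{i}d_{i}}\ge 1-\varepsilon$ together with $|\Gamma_{a_{i}b_{i}}|,|\Gamma_{c_{i}d_{i}}|\le 1$ confines each factor to an $O(\sqrt{\varepsilon})$-neighborhood of $\pm 1$ with matching sign, and the target values in the clause gadget can be tuned so that any logical violation propagates to an $\Omega(1)$ discrepancy in at least one specified Pfaffian. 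I expect the main obstacle to be the clause gadget: encoding the cubic OR-of-three relation through chained quadratic Pfaffian constraints in a way that is simultaneously (a) sound (no spurious Gaussian completion satisfies the local Pfaffian equations while evading a valid $\pm 1$-assignment), (b) globally compatible with the spectrum bound across all glued gadgets, and (c) robust to a constant error $\varepsilon$.
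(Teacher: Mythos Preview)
Your plan has the right overall shape (variable gadgets, chained quadratic Pfaffians for the cubic OR, constant-$\varepsilon$ robustness), but the clause gadget is only described, not constructed, and your chosen variable encoding actively blocks the natural construction. By forcing $\Gamma_{a_ib_i}\Gamma_{c_id_i}=1$ together with $\lVert\Gamma\rVert\le 1$, you saturate the spectrum on the pair $\{a_i,b_i\}$: since $I+i\Gamma\succeq 0$ and the $\{a_i,b_i\}$ block of $I+i\Gamma$ is singular, every other entry in rows/columns $a_i,b_i$ is forced to $0$. Consequently any $4$-body Pfaffian that contains the product $\Gamma_{a_1b_1}\Gamma_{a_2b_2}=y_1y_2$ already uses all four indices $\{a_1,b_1,a_2,b_2\}$, leaving no room for an auxiliary ``product register'' entry in the same Pfaffian, and no nonzero cross-coupling to import one. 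So the step ``use a $4$-body Pfaffian constraint to force an auxiliary $\pm1$ correlation to equal $y_1y_2$'' cannot be carried out as stated. With $\varepsilon$-slack the decoupling is only approximate (off-block entries are $O(\sqrt{\varepsilon})$), but then you are back to a delicate normalization analysis you have not provided.

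The paper avoids exactly this trap by \emph{not} using a $\pm1$ encoding. It scales all specified entries by a constant $\Phi>1$ so nothing saturates, encodes each variable as a pair $(x^0_j,x^1_j)$ with the $4$-body constraint forcing $x^0_j+x^1_j=1$, and writes the clause as $x^{s_1}_{i_1}x^{s_2}_{i_2}x^{s_3}_{i_3}=0$. The chaining to weight $\le 4$ then works because the intermediate product is stored in an \emph{off-block} entry $y^{s_1s_2}_{i_1i_2}:=\Phi\,\Gamma_{i_1,2+2s_1;\,i_2,2+2s_2}$, and a specified nonzero cross entry $\langle\gamma_{i_1,1}\gamma_{i_2,1}\rangle=1/\Phi$ brings $y$ linearly into a $4$-body Pfaffian together with the two literals; a second $4$-body Pfaffian then multiplies $y$ by $x^{s_3}_{i_3}$. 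You should also note that the problem carries a \emph{promise} that the training data arise from a physical state $\rho$; the paper discharges this separately by exhibiting such a $\rho$, and your proposal does not address it (your target value $\langle\gamma_{a_i}\gamma_{b_i}\gamma_{c_i}\gamma_{d_i}\rangle=1$ sits at the extreme of the physical range, which makes compatibility with the clause constraints nontrivial).
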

\begin{proof}[Proof sketch]
    We present the proof sketch for the easiest case, i.e., for \(\varepsilon=0\) and $k\le 6$ body correlation functions. We refer to Section~\ref{sec:proofs} for the more general case of $k\le 4$ and the analysis of finite accuracy \(\varepsilon>0\). 
    
    Consider a system of \(2\nx\) fermionic modes associated with a \(4\nx \times 4 \nx\) correlation matrix \(\Gamma\), where \(\nx\) is the number of \(3\)-SAT variables.  
    To encode \(3\)-SAT clauses into our problem, we impose constraints on \(\Gamma\). Specifically, we require the following block diagonal structure $
    \Gamma=\Gamma_1\oplus \cdots\oplus \Gamma_{\nx}$,
    where \(\Gamma_j\) is defined as
    \begin{align}
        \Gamma_j = \frac{1}{\Phi}\begin{pmatrix}
            0 & x_j^0 & 0 & x_j^1 \\
            -x_j^0 & 0 & 1 & 0 \\
            0 & -1 & 0 & 1 \\
            -x_j^1 & 0 & -1 & 0
        \end{pmatrix},
        \label{eq:6corrMAIN}
    \end{align}
    where \(\Phi\) is a constant, and \(x_j^0, x_j^1 \in [0,1]\) are parameters related to the \(3\)-SAT variables. Throughout the remainder of this proof, we adopt the notation \(\gamma_{j,k}:=\gamma_{4(j-1)+k}\), for all \(j\in [\nx]\), \(k\in [4]\). Additionally, we demand that 
    \begin{equation}
        \left<\gamma_{j
        ,1}\gamma_{j,2}\gamma_{j,3}\gamma_{j,4}\right>=\frac{1}{\Phi^2}
        \end{equation}
    for all \(j\in[\nx]\). {Next, we employ Wick's theorem~\cite{Surace_2022}, which relates higher-order Majorana expectation values of fermionic Gaussian states to their quadratic Majorana expectation values (see Lemma~\ref{le:Wick} for more details).} Using Wick's theorem, we deduce that \(x_j^1=1-x_j^0\). 
    We are now prepared to encode general \(3\)-SAT clauses. Let us consider the clause $
    s_1 x_{i} \lor s_2 x_{j} \lor s_3 x_{k}$, as in Problem~\ref{prob:3satMain}.
    In the following, we use \(s_{j} \in \{0,1\}\) such that \(s_{j} = 1\) if \(s_{j} \mapsto \lnot\). We impose the following six-point correlators to be zero,
    \begin{align}
        &\left<\gamma_{i,1}\gamma_{i,2+2s_1}\gamma_{j,1}\gamma_{j,2+2s_2}\gamma_{k,1}\gamma_{k,2+2s_3}\right> = 0.
    \end{align}
Again by Wick's theorem, this corresponds to the encoded clause $ x_{i}^{s_1}x_{j}^{s_2}x_{k}^{s_3} = 0$.
    Therefore, by solving this constrained matrix completion problem, we can find a solution for \(3\)-SAT. 
    The problem is also in $\class{NP}$ because given a correlation matrix and a training set, we can efficiently check if this correlation matrix is consistent with the training set. Hence, \(\class{Gaussian\, Consistency}\) is \(\class{NP}\)-complete. 
\end{proof}
 It is worth noting that, in the proof sketch of Theorem~\ref{th:1main}, the utilized correlation functions are confined to $k\leq 6$-point correlations. However, demonstrating a similar result with correlation functions limited to only up to $4$-points enhances the robustness of the conclusion and aligns better with practical experimental constraints. Additionally, in Section~\ref{sec:physicality}, we demonstrate that the correlation matrices employed for the 3-SAT reduction, such as the one in Eq.~\eqref{eq:6corrMAIN}, genuinely correspond to a physical state. 
 The following corollary formally establishes the \texttt{NP}-hardness of Problem~\ref{prob:pac_approx}.

\begin{corollary}[The proper PAC-learning of fermionic Gaussian states is \texttt{NP}-hard]
Proper PAC-learning problem~\ref{prob:pac_approx} of fermionic Gaussian states is \texttt{NP}-hard, even if we restrict the dataset to $k$-body Majorana operators with $k\le 4$ and accuracy $\varepsilon=O(1)$.
\begin{proof}
Thanks to Lemma~\ref{lem:equivalence}, we know that Problem~\ref{prob:pac_approx} is at least as Hard as Problem~\ref{prob:GaussianConsMAIN}, which is \texttt{NP}-complete due to Theorem~\ref{th:1main}. Hence,  Problem~\ref{prob:pac_approx} is \texttt{NP}-hard (under Cook reduction).
\end{proof}
\end{corollary}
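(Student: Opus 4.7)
The corollary is essentially a two-line combination of the two preceding results, so my plan is to make the chain of reductions fully explicit and check that the restrictions on the dataset carry through. The strategy is a standard many-to-one (Cook) reduction: take an arbitrary instance $I$ of Problem~\ref{prob:GaussianConsMAIN} (the $\class{Gaussian\,Consistency}$ problem), feed its training set and accuracy parameter $\varepsilon$ as an instance into any hypothetical algorithm $\mathcal{A}$ for the proper PAC-learning Problem~\ref{prob:pac_approx}, and then decide $I$ by inspecting $\mathcal{A}$'s output.

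First, I would verify that the instance $I$ produced by the reduction in Theorem~\ref{th:1main} satisfies the promise of Problem~\ref{prob:pac_approx}, namely that the training-set values $\langle\gamma_S\rangle_T$ really are expectation values of some fermionic Gaussian state. This is exactly the content of the physicality remark following Theorem~\ref{th:1main} (and Section~\ref{sec:physicality}): the block-diagonal correlation matrix used in the 3-SAT encoding genuinely corresponds to a Gaussian state, so the YES-instances of $\class{Gaussian\,Consistency}$ fit inside the PAC promise. For NO-instances, $\mathcal{A}$ is not obliged to output anything sensible, so our decision procedure simply rejects whenever $\mathcal{A}$ fails to return a valid correlation matrix whose expectation values agree with $\mathcal{T}$ within $\varepsilon$. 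This check is polynomial in the input size, since computing $\langle\gamma_S\rangle_{\sigma}=\Pf(\Gamma(\sigma)|_S)$ for each $S\in\mathcal{S}$ is polynomial, as is verifying that $\Gamma(\sigma)$ is antisymmetric with spectrum in $[-1,1]$.

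Second, I would package this into a single-call Cook reduction exactly as in Lemma~\ref{lem:equivalence}: YES-instances of $\class{Gaussian\,Consistency}$ cause $\mathcal{A}$ to succeed (by definition of a PAC-learner), and NO-instances force $\mathcal{A}$ to fail the verification (since no Gaussian $\sigma$ exists that matches $\mathcal{T}$ to within $\varepsilon$). Combined with Theorem~\ref{th:1main}, which asserts $\class{NP}$-completeness of $\class{Gaussian\,Consistency}$ even when restricted to $k$-body Majorana operators with $k\le 4$ and $\varepsilon=O(1)$, this yields $\class{NP}$-hardness of Problem~\ref{prob:pac_approx} under the same restrictions, since the reduction preserves both the Majorana-locality of the training set and the accuracy parameter.

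There is no real obstacle here; the only point of care is the promise structure of Problem~\ref{prob:pac_approx} (it asks for a state assuming one exists and the data came from a Gaussian state), which is why step one above is needed to ensure the reduction outputs instances that actually meet the promise rather than arbitrary training sets. Once that is observed, the corollary follows immediately and the $k\le 4$, $\varepsilon=O(1)$ qualifiers are inherited verbatim from Theorem~\ref{th:1main}.
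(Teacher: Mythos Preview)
Your approach is correct and matches the paper's own two-line proof: chain Lemma~\ref{lem:equivalence} with Theorem~\ref{th:1main}. Your extra care about the promise of Problem~\ref{prob:pac_approx} is a valid observation that the paper's Lemma~\ref{lem:equivalence} leaves implicit.

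One misattribution, though: the reason YES-instances of the 3-SAT-derived $\class{Gaussian\,Consistency}$ problem satisfy the PAC promise (training values are exact expectation values of a \emph{Gaussian} state) is the \emph{completeness} construction in the reduction---building a valid correlation matrix from a satisfying assignment and checking $\|\Gamma\|_\infty\le 1$---not Section~\ref{sec:physicality}. That section does the opposite job: it constructs a \emph{non-Gaussian} physical state $\rho$ that matches the training set for \emph{all} instances (including NO ones), which is what the promise of Problem~\ref{prob:GaussianConsMAIN} needs. Your conclusion is right, only the pointer is off.
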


The careful reader may wonder whether an analogous no-go result can be provided in the case of $k\leq 2$ Majorana product operators. However, the answer is trivially negative: indeed, recalling that the correlation matrix $\Gamma(\rho)$ is comprised of $k=2$ Majorana products, in this case, it would be sufficient for the learner to complete the correlation matrix without any Pfaffian constraints which can be solved by a SDP in $\mathrm{poly}(n)$ running time. As outlined in the proof sketch of Theorem~\ref{th:1main}, the core of the hardness result lies in the difficulty of satisfying Pfaffian constraints that are naturally nonlinear constraints.

Before concluding the section, it is important to highlight that our prior results effortlessly generalize to the situation where we inquire about the existence of a \textit{pure} Gaussian state that accurately matches the provided data. This stems from the fact~\cite{Windt_2021} that, for any valid correlation matrix associated with an $n$-mode mixed fermionic Gaussian state, we can always construct a correlation matrix for $2n$-modes corresponding to a pure state. This pure state exhibits the property that its reduced state corresponds to the initial state, when solely considering the first $n$-modes and tracing out the rest.

\subsection{Discussion and open questions}
\label{sec:discussion}
Our investigation reveals that determining the existence of a fermionic Gaussian state consistent with a given dataset of few $2$- and $4$-point Majorana expectation values is NP-complete, even when allowing for constant errors in matching the provided expectation values. This discovery marks the first instance of classically simulable class of quantum states identified as not efficiently proper PAC-learnable, signifying a significant advancement in quantum learning theory and prompting several engaging research directions discussed subsequently.

We have shown that proper PAC-learning is hard with respect Majorana expectation values, however this does not necessarily extend to the realm of improper PAC-learning—where the hypothesis state is not constrained to be Gaussian. 
Nevertheless, the setup of proper learning is convenient: when creating a PAC learning algorithm for a given class of states, a learner usually starts with an ansatz class of quantum states required to match the expectation values provided in the training set. A natural initial step is to choose the ansatz class within the concept class. In this way, the learner is automatically ensured that there exists at least one quantum state satisfying all the training data.  
Additionally, searching for hypothesis states in the concept class is convenient because one must be able to compute expectation values efficiently for future classical predictions—a condition automatically fulfilled when the hypothesis state belongs to a classically simulable ansatz class, such as fermionic Gaussian states.  
{While the setting of proper PAC learning is natural, the question of whether improper learning offers advantages for PAC-learning fermionic Gaussian states is also interesting and it remains open. Unfortunately, our current proof techniques are unable to capture this case, primarily because they rely heavily on Wick's theorem (Lemma~\ref{le:Wick}). However, we expect that an alternative approach could be developed for improper PAC learning and leave this as an open direction for future work.}  
It would be also interesting to understand if there are other meaningful classes of observables expectation values, different from Majorana correlation functions, for which proper PAC-learning of fermionic Gaussian states can be solved in polynomial time. 

Furthermore, we note that the efficiency of PAC-learning states that slightly deviate from stabilizer states (which can be PAC-learned efficiently~\cite{rocchetto2018stabiliser}) remains an open question. For example, $t$-doped stabilizer states~\cite{leone2023learning22PUBL1,leone2023learning22PUBL2} constitute an efficiently simulable class for $t=O(\log n)$~\cite{gottesman1998heisenberg}, and their PAC-learning efficiency is yet to be determined. Similarly, states with small stabilizer rank~\cite{Bravyi_2016} or small stabilizer extent~\cite{Bravyi_2019} are classically simulable, but little is known about the efficiency of learning tasks for these classes (both in the tomography and the PAC-learning framework). Moreover, it would be explore if hypergraph states~\cite{Rossi_2013} can be efficiently PAC-learned.

\section{Preliminaries}\label{sec:preliminaries}
This section comprehensively lays out the concepts needed to understand our results.

\subsection{Fermionic Gaussian states}
In this subsection, we provide definitions and lemmas on fermionic Gaussian states, which are useful for deriving our results.  
We consider a system composed of $n$ modes with creation and annihilation operators $a_{j}^{\dagger}$ and $a_{j}$, with $j\in[n]$, satisfying the canonical anti-commutation relations~\cite{Surace_2022},
\begin{align}
    \{a_{k}, a_{l}\}=0, \quad \{a_{k}, a_{l}^{\dagger}\}=\delta_{k,l}.
\end{align}
From this, we introduce the definition of Majorana operators. 
\begin{definition}[Majorana operators]
We can define the Majorana operators as
\begin{align}
    \gamma_{2k-1}  \coloneqq a_k + a_k^\dagger, \quad\quad
    \gamma_{2k}  \coloneqq i(a_k^\dagger - a_k).
\end{align}
\end{definition}

Majorana operators satisfy the anti-commutation relations 
\begin{align}
    \{\gamma_\mu,\gamma_\nu\}=2\delta_{\mu,\nu} I \quad \text{for all} \quad \mu,\nu \in [2n]
\end{align}
and are orthogonal with respect to the Hilbert-Schmidt inner product. It holds for $\mu \in [2n]$
that
\begin{align}
    \gamma_\mu=\gamma^\dagger_\mu, \quad \Tr(\gamma_\mu)=0, \quad \gamma^2_\mu=I.
\end{align}

Majorana operators can be represented also in terms of the Pauli basis as follows.
\begin{definition}[Jordan Wigner Transform]
For each \(k \in \left[n\right]\), Majorana operators can be represented by Pauli operators on $n$ qubits via the Jordan-Wigner transform 
\begin{align}
    \gamma_{2 k-1}=\left(\prod_{j=1}^{k-1} Z_j\right) X_k, \quad \gamma_{2 k}=\left(\prod_{j=1}^{k-1} Z_k\right) Y_k.
\end{align}
\label{def:majo}
\end{definition}
As can be readily verified, that these operators have all the desired properties. 
\begin{definition}[Majorana products]
    Let \(S\) be the set \(S := \{\mu_1,\dots,\mu_{|S|}\} \subseteq [2n]\) with \(1\le\mu_1 <\dots < \mu_{|S|}\le 2n \). 
    We define  $$\gamma_S:=\gamma_{\mu_1}\cdots\gamma_{\mu_{|S|}}$$ if $S\neq \emptyset$ and  $\gamma_{\emptyset}=I$ otherwise.
\end{definition}
It is worth noting that the number of different sets \(S\in[2n]\) and, hence, Majorana products is \(4^n\).
For any set $S,S^\prime \subseteq [2n]$, Majorana products are orthogonal with respect to the Hilbert Schmidt product. Hence, they form a basis for \(\mathcal{L}(\mathcal{H}_n)\). For convenience, it will be useful to denote the expectation values of Majorana products as
$$
\langle\gamma_{S}\rangle_\rho\coloneqq i^{-|S|/2}\tr(\gamma_S\rho)\,.$$

\begin{definition}[Correlation matrix]
    Given a (general) quantum state $\rho$, we define its \emph{correlation matrix} $\Gamma(\rho)$ as
    \begin{align}
        [\Gamma(\rho)]_{j,k}:=-\frac{i}{2}\Tr\left(\left[\gamma_j,\gamma_k\right]\rho\right)
    \end{align}
    where $j,k\in[2n]$. Moreover, it is useful to define the shorthand $ \langle\gamma_{j}\gamma_k\rangle_{\rho}:=[\Gamma(\rho)]_{j,k}$.
\end{definition}
The correlation matrix of any state is real and anti-symmetric, thus it has eigenvalues in pairs of the form $\pm i \lambda_j$ for $j\in[2n]$, where $\lambda_j$ are real numbers such that $|\lambda_j|\le 1$.

\begin{definition}[Fermionic Gaussian states]
    Fermionic Gaussian states are the set of states defined as (the closure of) Gibbs states of quadratic Hamiltonians in the Majorana operators, i.e.,
    \begin{align}
        H&=i \sum_{\nu,\mu\in [2n]}h_{\nu\mu}\gamma_{\nu}\gamma_{\mu},\\
        \rho&=\frac{\exp(-H)}{\Tr(\exp(-H))}.
    \end{align}
    Here, $h\in \mathbb{R}^{2n \times 2n}$ is anti-symmetric $h^T=-h$.
\end{definition}
From the previous definition, it can be shown that $\Gamma(\rho)=\tan(2h)$, where $\tan(\cdot)$ is the matrix function of the tangent. Since the tangent is injective for anti-symmetric matrices, any Gaussian state is uniquely determined by its correlation matrix.
Thus, by specifying a valid correlation matrix (i.e., real, anti-symmetric, with eigenvalues smaller than one), we uniquely specify a Gaussian state. Vice versa, having a Gaussian state uniquely defines a correlation matrix.

\begin{definition}[$k$-point correlation function]
Given a quantum state $\rho$ and an operator $\gamma_S := \gamma_{\mu_1}\cdots\gamma_{\mu_{k}}$, where $S = \{\mu_1, \dots, \mu_{k}\} \subseteq [2n]$ with $1 \leq \mu_1 < \dots < \mu_{k} \leq 2n$, we define $\Tr(\gamma_S\rho)$ as a $k$-point correlation function.
\end{definition}
As in the formulation of the PAC-learning problem, the training set is comprised of a set of Majorana product operators, it is worth defining them formally in the following definition. 

\begin{definition}[Set of $k$-point correlation functions] Let $\mathcal{S}$ be a set of ordered indices $S\subset[2n]$. We identify $\mathcal{S}$ with a set of Majorana product operators via the mapping
\be
\mathcal{S}\ni S=\{\mu_1,\ldots,\mu_{|S|}\}\mapsto \gamma_{S}=\gamma_{\mu_1}\cdots\gamma_{\mu_{|S|}}.
\ee
\end{definition}

In our analysis it will be crucial the notion of 
the \emph{Pfaffian} of a matrix.

\begin{definition}[Pfaffian of a matrix]
Let $C$ be a $2n\times 2n$ anti-symmetric matrix. Its Pfaffian is defined as
\begin{align}
\operatorname{Pf}(C)=\frac{1}{2^n n !} \sum_{\sigma \in S_{2 n}} \operatorname{sgn}(\sigma) \prod_{i=1}^n C_{\sigma(2 i-1), \sigma(2 i)},
\end{align}
where $S_{2 n}$ is the symmetric group of order $(2 n) !$ and $\operatorname{sgn}(\sigma)$ is the signature of $\sigma$. 
The Pfaffian of an $m\times m$ anti-symetric matrix with $m$ odd is defined to be zero.
\end{definition}
Well-known properties are the following. For any matrix \(Q\), we have \(\operatorname{Pf}(QCQ^T) = \det(Q) \operatorname{Pf}(C)\) and $\operatorname{Pf}(\lambda C) = \lambda^{n} 
 \operatorname{Pf}(C)$, where $C$ is a $2n\times 2n$ anti-symmetric matrix and $\lambda \in \mathbb{C}$.
Moreover, it holds that \(\operatorname{Pf}(C)^2 = \det(C)\) (note that this is consistent with the fact that the Pfaffian of an odd anti-symmetric matrix is defined to be zero, since the determinant of an odd anti-symmetric matrix is zero).
Another useful identity is
\begin{align}
\operatorname{Pf}\left(\bigoplus_{j = 1}^{n} \begin{pmatrix} 0 &  \lambda_j \\ -\lambda_j & 0 \end{pmatrix}\right)= \prod^n_{j=1} \lambda_j.
\end{align}
Now we recall the well-known Wick's theorem, which states that the any correlation function (i.e.,  Majorana product expectation value) over a fermionic Gaussian state can be computed efficiently given access to its correlation matrix. 
\begin{lemma}[Wick's Theorem \cite{Surace_2022}]
\label{le:Wick}
    Let $\rho$ be a fermionic Gaussian state with the associated correlation matrix $\Gamma(\rho)$. Then, we have
    \begin{align}
        \left<\gamma_S\right>_\rho\coloneqq i^{-|S|/2}\Tr(\gamma_S \rho) = \operatorname{Pf}( \Gamma(\rho)\rvert_{S}),
    \end{align}
    where $\gamma_S = \gamma_{\mu_1}\cdots\gamma_{\mu_{|S|}}$, and $S = \{\mu_1, \dots, \mu_{|S|}\} \subseteq [2n]$ with $1 \leq \mu_1 < \dots < \mu_{|S|} \leq 2n$, while $\Gamma(\rho)\rvert_{S}$ is the restriction of the matrix $\Gamma(\rho)$ to the rows and columns corresponding to elements in $S$. 
\end{lemma}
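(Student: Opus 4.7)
My plan is to prove Wick's theorem by induction on $|S|$, showing that $\langle\gamma_S\rangle_\rho$ satisfies the same row-cofactor recursion as $\Pf(\Gamma(\rho)|_S)$. The base cases $|S|\in\{0,1,2\}$ follow directly from definitions: $\Pf$ of the $0\times 0$ matrix equals $1=\langle I\rangle_\rho$; for $|S|=1$ the Pfaffian vanishes and $\Tr(\gamma_\mu\rho)=0$ by the definite parity of any Gaussian $\rho\propto e^{-H}$ with $H$ parity-even; and for $|S|=2$ the identity reduces to $\langle\gamma_\mu\gamma_\nu\rangle_\rho=\Gamma(\rho)_{\mu,\nu}$ together with the Pfaffian evaluation $\Pf\!\begin{pmatrix}0&\Gamma_{\mu\nu}\\-\Gamma_{\mu\nu}&0\end{pmatrix}=\Gamma_{\mu\nu}$.

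For the inductive step, I would establish, for any Gibbs Gaussian $\rho$ and ordered $S=\{\mu_1<\cdots<\mu_k\}$ with $k\geq 4$ even, the recursion
\be
\langle \gamma_{\mu_1}\gamma_{\mu_2}\cdots\gamma_{\mu_k} \rangle_\rho \;=\; \sum_{j=2}^{k} (-1)^j\, \Gamma(\rho)_{\mu_1,\mu_j}\, \langle \gamma_{\mu_2}\cdots \widehat{\gamma_{\mu_j}} \cdots \gamma_{\mu_k} \rangle_\rho,
\ee
which has the same structure as the first-row cofactor expansion $\Pf(A)=\sum_{j=2}^{k}(-1)^j A_{1,j}\,\Pf(A_{\widehat{1,j}})$ applied to $A=\Gamma(\rho)|_S$. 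To derive it, I would exploit that $H=i\sum h_{\nu\mu}\gamma_\nu\gamma_\mu$ is quadratic: a direct commutator computation shows $[H,\gamma_\alpha]=-4i\sum_\nu h_{\alpha\nu}\gamma_\nu$ is linear in the Majoranas, so $\mathrm{ad}_H$ preserves the span $\{\gamma_\beta\}_\beta$. Exponentiating yields the pull-through identity $\rho\gamma_\alpha\rho^{-1}=\sum_\beta M_{\alpha\beta}\gamma_\beta$ for an invertible real matrix $M$. Combined with cyclicity of the trace, this gives $\Tr(\rho\gamma_{\mu_1}A)=\sum_\beta M_{\mu_1,\beta}\,\Tr(\rho A\gamma_\beta)$ for every operator $A$. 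Taking $A=\gamma_{\mu_2}\cdots\gamma_{\mu_k}$ and commuting $\gamma_\beta$ back to the leftmost position via $\gamma_\beta\gamma_{\mu_i}=2\delta_{\beta,\mu_i}I-\gamma_{\mu_i}\gamma_\beta$ produces a linear equation expressing $\Tr(\rho\gamma_S)$ in terms of itself and the $(k{-}2)$-point traces $\Tr(\rho\gamma_{S\setminus\{\mu_1,\mu_j\}})$. Solving for the top-order trace identifies the coefficient matrix as $2(I+M)^{-1}M$; the $|S|=2$ base case pins this matrix down as $i\,\Gamma(\rho)$, and restoring the $i^{-|S|/2}$ normalization produces the recursion displayed above.

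The main obstacle is the sign and normalization bookkeeping in this pull-through derivation: the alternating signs produced by the anti-commutation cascade must align precisely with those of the Pfaffian expansion, and the $i^{-|S|/2}$ normalization factors must be tracked carefully across every step. A secondary technical point is that pure Gaussian states (those for which some eigenvalues of $\Gamma(\rho)$ have modulus one) lie only in the closure of the Gibbs family, so $\rho^{-1}$ does not literally exist; this I would handle by perturbing such eigenvalues as $\lambda_j\mapsto(1-\epsilon)\lambda_j$ to obtain a nearby full-rank Gibbs state, applying the recursion, and taking $\epsilon\to 0^+$ by continuity of both sides in the entries of $\Gamma(\rho)$ (each side is polynomial in those entries). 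An alternative route that avoids invertibility altogether is to orthogonally block-diagonalize $\Gamma(\rho)$ to canonical form, verify the identity directly on the resulting tensor product of single-mode Gaussian states where both sides factorize over pairs of Majorana modes, and transport back via the matchgate unitary implementing the orthogonal basis change; this trades invertibility for tracking how the ordered product $\gamma_S$ expands under the orthogonal transformation.
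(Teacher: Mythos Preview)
The paper does not prove this lemma; it is quoted from Ref.~\cite{Surace_2022} as a standard fact (``Now we recall the well-known Wick's theorem\ldots''), so there is no in-paper proof to compare against. Your proposal is therefore supplying a proof the paper deliberately omits.

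That said, your strategy is sound and is essentially the Gaudin pull-through proof of fermionic Wick's theorem: use that $\mathrm{ad}_H$ acts linearly on the Majorana span to obtain $\rho\gamma_\alpha\rho^{-1}=\sum_\beta M_{\alpha\beta}\gamma_\beta$, apply cyclicity, anticommute $\gamma_\beta$ back to the front, and solve for the top-order trace to recover the first-row Pfaffian cofactor expansion. The base cases and the continuity argument for states on the boundary of the Gibbs family are both fine, and the alternative block-diagonalization route you sketch is equally standard.

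One small correction: the matrix $M$ is \emph{not} real in general. With $H=i\sum_{\nu\mu}h_{\nu\mu}\gamma_\nu\gamma_\mu$ and $h$ real antisymmetric, $\mathrm{ad}_H$ acts on $\mathrm{span}\{\gamma_\beta\}$ as the matrix $-4ih$, so $M=e^{4ih}$ is Hermitian positive-definite but typically complex (e.g.\ in the $2\times 2$ case $M=\cosh(4a)I+\tfrac{\sinh(4a)}{a}\,ih$). This does not affect your argument at all---the pull-through works for any invertible $M$, and the identification $2M(I+M)^{-1}-I=(M-I)(M+I)^{-1}=\tanh(2ih)$ still lands on $i\Gamma(\rho)$---but you should drop the word ``real''. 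Otherwise your identification of the main obstacle (alternating signs matching the Pfaffian expansion and the $i^{-|S|/2}$ normalization) is accurate, and once that bookkeeping is done the induction closes.
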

Note that, for any $S\subseteq [2n]$, the restriction of the correlation matrix $\Gamma\rvert_{S}$ is still anti-symmetric, Therefore,  its Pfaffian is well-defined. 
The Pfaffian of $\Gamma\rvert_{S}$ can be computed efficiently in time $\Theta(|S|^{3})$.
In our discussion, it will be useful the following lemma.

\begin{lemma}[Purification of a mixed Gaussian state \cite{Windt_2021}] 
\label{le:purification}
A fermionic Gaussian mixed state of $n$-modes can be always purified to a fermionic Gaussian pure state of $2n$-modes.
\end{lemma}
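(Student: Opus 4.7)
The plan is to reduce the purification to handling each symplectic pair of eigenvalues of $\Gamma(\rho)$ separately by passing to the canonical block-diagonal form of a real antisymmetric matrix, and then verifying the purification block by block. First I would invoke Youla's normal form: there exists a $2n \times 2n$ real orthogonal matrix $O$ such that $\Gamma(\rho) = O^T K O$ with
\begin{align}
K = \bigoplus_{j=1}^{n}\begin{pmatrix} 0 & \lambda_j \\ -\lambda_j & 0 \end{pmatrix},
\end{align}
and every $\lambda_j \in [-1,1]$ by validity of $\Gamma(\rho)$. Since purity of a Gaussian state is equivalent to $\Gamma^2 = -I$ and partial tracing over modes corresponds to restricting the correlation matrix to the complementary indices, the task reduces to constructing, for each block, a $4\times 4$ antisymmetric matrix $K_j^{\mathrm{pure}}$ which squares to $-I_4$ and whose upper-left $2\times 2$ submatrix equals the given block.

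Next I would write down an explicit block purification. Setting $\mu_j := \sqrt{1-\lambda_j^2}$, I would define
\begin{align}
K_j^{\mathrm{pure}} := \begin{pmatrix} 0 & \lambda_j & 0 & \mu_j \\ -\lambda_j & 0 & \mu_j & 0 \\ 0 & -\mu_j & 0 & \lambda_j \\ -\mu_j & 0 & -\lambda_j & 0 \end{pmatrix},
\end{align}
and verify by a direct $4\times 4$ matrix multiplication that $(K_j^{\mathrm{pure}})^2 = -I_4$; the off-diagonal cross terms in the square all take the form $\lambda_j\mu_j - \mu_j\lambda_j$ and cancel, while the diagonal entries collapse to $-(\lambda_j^2 + \mu_j^2) = -1$. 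This yields a two-mode pure Gaussian state whose reduction to the original mode reproduces the given symplectic pair.

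Then I would assemble the $4n \times 4n$ purified correlation matrix by taking $\bigoplus_{j=1}^{n} K_j^{\mathrm{pure}}$ and conjugating by a permutation $P$ that relabels the $4n$ Majorana indices so that the first $2n$ correspond to the Majoranas of the original $n$ modes and the last $2n$ to the ancillary modes. This produces an antisymmetric $K^{\mathrm{pure}}$ with $(K^{\mathrm{pure}})^2 = -I_{4n}$ whose upper-left $2n\times 2n$ block equals $K$. Finally I would set
\begin{align}
\Gamma_{\mathrm{pure}} := (O \oplus I_{2n})^T \, K^{\mathrm{pure}} \, (O \oplus I_{2n}),
\end{align}
which remains antisymmetric and still squares to $-I_{4n}$ by orthogonality of $O \oplus I_{2n}$, while its restriction to the first $2n$ indices is $O^T K O = \Gamma(\rho)$. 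By the standard correspondence between mode restriction of the correlation matrix and partial tracing of the associated Gaussian state, $\Gamma_{\mathrm{pure}}$ is then the correlation matrix of a pure $2n$-mode Gaussian state that purifies $\rho$.

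The main obstacle is identifying a sign pattern for $K_j^{\mathrm{pure}}$ that yields clean cancellations in the square: more symmetric-looking choices (for example, placing $-\lambda_j$ on the ancillary diagonal block) leave residual off-diagonal entries in $(K_j^{\mathrm{pure}})^2$ proportional to $2\lambda_j\mu_j$. Once the right ansatz is chosen, the remaining steps---antisymmetry, invariance of the pure-state condition under orthogonal conjugation, and the reordering of Majorana indices via $P$---are routine bookkeeping.
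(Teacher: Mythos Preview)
Your proof is correct. The paper itself does not provide a proof of this lemma; it is simply stated and attributed to Ref.~\cite{Windt_2021} as a known fact, so there is no in-paper argument to compare against. Your approach---block-diagonalizing $\Gamma(\rho)$ via Youla's normal form, purifying each $2\times 2$ symplectic block to an explicit $4\times 4$ antisymmetric matrix squaring to $-I_4$, and then undoing the rotation on the original $2n$ indices---is the standard construction and is exactly the content of the cited reference. The explicit block $K_j^{\mathrm{pure}}$ you wrote down checks out (one can verify directly that the cross terms in $(K_j^{\mathrm{pure}})^2$ cancel as you claim), and the bookkeeping with the permutation $P$ and the conjugation by $O\oplus I_{2n}$ is sound.
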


\subsection{The 3-SAT problem and \texttt{NP}-hardness}
For a comprehensive overview of complexity theory, we recommend consulting standard references like~\cite{arora_barak_2009}. In this section, we will outline notions related to the 3-SAT problem and a common approach employed to establish \texttt{NP}-hardness.
The 3-SAT problem is a fundamental computational problem in complexity theory. It belongs to the class of problems known as \texttt{NP}-complete, which implies that it is both in the complexity class \texttt{NP}-(\emph{non-deterministic polynomial time}) and is at least as hard as the hardest problems in \texttt{NP}.
In 3-SAT, we are given a Boolean formula consisting of a \emph{conjunction} (AND) of clauses. Each clause is a \emph{disjunction} (OR) of exactly three literals. A literal can be either a variable or its negation. For example, a clause may take the form \( (x_1 \vee \neg x_2 \vee x_3) \), where \(x_i\) represents a Boolean variable and \(\neg\) denotes negation.
The goal of the 3-SAT problem is to determine if there exists an assignment of truth values (true or false) to the variables such that the entire Boolean formula evaluates to true. This assignment is often referred to as a satisfying assignment.
Given a 3-SAT formula \( \phi \) with \(n\) variables and \(m\) clauses, the question is whether there exists an assignment of truth values \(x_1, x_2, \ldots, x_n\) such that \( \phi(x_1, x_2, \ldots, x_n) \) evaluates to true.
Subsequently, we provide the rigorous definition of the problem.
\begin{problem}[$3$-SAT]\label{prob:3sat} Problem:
    \begin{itemize}
        \item Input: A list of indices $\vec i\in [\nx]^{3,n_c}$ and a negation table $s\in \{\cdot, \lnot\}^{3,n_c}$.
        \item Output: Determine, if a Boolean variable $x\in\{0,1\}^{\nx}$ exists which satisfies all clauses.
        \begin{align}
            \forall c\in[n_c]:\quad s_{1,c} x_{i_1(c)}\lor s_{2,c} x_{i_2(c)}\lor s_{3,c} x_{i_3(c)}.
        \end{align}
    \end{itemize}
    \end{problem}
Here, $[\nx]^{3,n_c}$ means that we take $n_c$ subsets of $3$ elements belonging to $[\nx]$.
To establish the \texttt{NP}-hardness of a problem, a commonly employed approach involves demonstrating a polynomial-time reduction from $3$-SAT to the specific problem at hand. This process entails the transformation of $3$-SAT instances into instances of the target problem in a manner that a solution to the transformed instance directly implies a solution to the original $3$-SAT instance.
The key insight lies in recognizing that efficient resolution of the target problem would also imply an efficient resolution of $3$-SAT, which is known to be \texttt{NP}-hard. To show the legitimacy of the reduction, one must prove that it operates in polynomial time and satisfies the conditions of ``completeness" and ``soundness".
``Completeness" asserts that if a solution exists for $3$-SAT, then a solution also exists for the transformed version of the problem of interest. 
``Soundness" states that if a solution does not exist for $3$-SAT, then there is no solution also for the transformed version of the problem of interest.
Thus, the \texttt{NP}-hardness of 3-SAT serves as a reference point for proving the computational intractability of other problems, including the one currently under consideration in our work.

\subsection{Background on probably approximately correct (PAC) learning of quantum states}
\label{sec:PAC}

A pivotal task in quantum information science is the one of characterizing quantum states through observations, such as outcomes from measurements performed on the system. Various learning paradigms have been introduced to address this challenge. One such approach is full quantum state tomography~\cite{anshu2023survey}, which aims to furnish a classical representation of the density matrices that approximate the unknown state. However, it is well-established that in the worst-case scenario, full state tomography necessitates a sample size that scales exponentially with the number of qubits in the system~\cite{Haah_2017}.
In a seminal work by Aaronson \cite{Aaronson_2007}, an alternative paradigm for learning quantum states, termed \emph{probably approximately correct} (PAC)-learning of quantum states, has been introduced, based on the 
familiar concept of PAC learning in computational learning theory. In this framework, the objective is to predict the expectation values of \emph{positive operator-valued measure} (POVM) elements. This prediction is based on a training set comprised of pairs of POVM elements and their corresponding expectation values, drawn from a distribution over the POVM elements. The aim is to accurately predict the expectation values of unseen POVM elements that share the same underlying distribution.

Remarkably, Aaronson has demonstrated a surprising result: The entire class of $n$-qubit quantum states can be PAC-learned using only $O(n)$ samples from such a training set. However, while Aaronson's result addresses sample efficiency, it does not address the question of whether time-efficient PAC-learning of specific classes of quantum states can be performed. Subsequent research has made further advancements in this context. For instance, it has been demonstrated that classes such as stabilizer states \cite{rocchetto2018stabiliser} and states with low entanglement (low Schmidt rank) \cite{yoganathan2019condition} can be PAC-learned time-efficiently. Furthermore, conditions have been identified that imply time-efficient PAC-learnability \cite{yoganathan2019condition}.
These classes are noteworthy examples of quantum states that can be also efficiently simulated classically. 
In Ref.~\cite{Aaronson_2007}, Aaronson has extended the PAC learning framework from classical learning theory to the domain of quantum states. For a detailed mapping between the standard notations in classical learning theory and PAC learning of quantum states, we refer the interested reader to 
Ref.~\cite{rocchetto2018stabiliser}.
The problem of PAC learning of quantum states can be formulated as follows. We are given a set $T$ comprising $m$ two-outcome POVM elements $\{E^{(1)}_i\}_{i\in[m]}$, along with their associated expectation values $\{\operatorname{Tr}(\rho E^{(1)}_i )\}_{i\in[m]}$, where $\rho$ represents the unknown quantum state. We represent this set as 
\begin{equation}
T := \{( E^{(1)}_i, \operatorname{Tr}(\rho E^{(1)}_i ) )\}_{i\in[m]},
\end{equation}
referred to as the \emph{training set}. We assume the $m$ POVM elements in the training set are drawn from an unknown probability distribution $\mathcal{D}$ over two-outcome POVMs. The objective is to predict the expectation value, within a specified accuracy and failure probability, of a new POVM element $E^{(1)}$ drawn from $\mathcal{D}$ based on the information contained in the training set $T$.
Aaronson \cite{Aaronson_2007} has demonstrated that if the training set has size $m=O(n)$, then an algorithm exists to solve the problem of PAC learning of quantum states. This algorithm is remarkably straightforward: It involves generating a \emph{hypothesis} quantum state $\sigma$, which aligns with all the expectation values in the training set within a given margin of error for each value. More precisely, the theorem can be formulated as follows (using the notation from 
Ref.\ \cite{yoganathan2019condition}):

\begin{theorem}[Quantum Occam's razor \cite{Aaronson_2007}] 
Let $\mathcal{C}_n$ be a class of $n$-qubit quantum states, and $\mathcal{C}=\bigcup_n \mathcal{C}_n$. Let $\mathcal{M}_n$ be a set of two-outcome POVMs on $n$-qubits, and $\mathcal{M}=\bigcup_n \mathcal{M}_n$. Let $\left\{E_i\right\}_{i\in[m]}$ be $m$ POVM elements drawn from the distribution $\mathcal{D}_n$ over $\mathcal{M}_n$, let $\rho \in \mathcal{C}_n$ be an $n$-qubit quantum state, and define the set $T=\left\{\left(E_i, \operatorname{Tr}\left(E_i \rho\right)\right)\right\}_{i \in[m]}$ as the \emph{training set}. Also, fix error parameters $\varepsilon, \eta, \gamma, \delta>0$ with $\gamma \varepsilon \geq 7 \eta$. Suppose there exists a hypothesis $n$-qubit quantum state $\sigma$ such that $$\left|\operatorname{Tr}\left(E_i \sigma\right)-\operatorname{Tr}\left(E_i \rho\right)\right| \leq  \eta $$ for each $i \in[m]$. Then, with probability at least $1-\delta$, we have
\begin{equation}
\operatorname{Pr}_{E \sim D_n}[|\operatorname{Tr}(E \sigma)-\operatorname{Tr}(E \rho)| \leq \varepsilon] \geq 1-\delta
\end{equation}
provided there were enough POVM elements in the training set
\begin{equation}
m \geq \frac{K}{\sigma^2 \varepsilon^2}\left(\frac{n}{\gamma^2 \varepsilon^2} \log ^2 \frac{1}{\gamma \varepsilon}+\log \frac{1}{\delta}\right),
\end{equation}
where $K>0$ is a constant.
\end{theorem}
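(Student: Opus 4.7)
The plan is to cast Aaronson's Quantum Occam's Razor as a uniform convergence statement for the real-valued function class
\[
\mathcal{F}_n \;:=\; \{\, f_\rho : E \mapsto \operatorname{Tr}(E\rho) \;\mid\; \rho \in \mathcal{C}_n \,\},
\]
where the POVM element $E$ is drawn i.i.d.\ from the distribution $\mathcal{D}_n$. The overall logic is standard: if the empirical pointwise values on $m$ i.i.d.\ POVMs concentrate around the true expectation values \emph{uniformly} over the whole class $\mathcal{F}_n$, then any hypothesis $\sigma \in \mathcal{C}_n$ that matches the training data within $\eta$ must also match $\rho$ within $\varepsilon$ on a fresh element (up to failure probability $\delta$). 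The slack condition $\gamma\varepsilon \ge 7\eta$ is the usual margin buffer required by real-valued generalization theorems.

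The technical heart of the proof is to bound the $\gamma$-fat-shattering dimension of $\mathcal{F}_n$, with target $\operatorname{fat}_\gamma(\mathcal{F}_n) = O(n/\gamma^2)$. Suppose POVM elements $E_1,\dots,E_d$ are $\gamma$-fat-shattered: then for every $b \in \{0,1\}^d$ there is an $n$-qubit state $\rho_b \in \mathcal{C}_n$ such that measuring the two-outcome POVM $\{E_i, I-E_i\}$ on $\rho_b$ recovers the bit $b_i$ with bias at least $\gamma$. This is precisely a quantum random access code encoding $d$ classical bits into $n$ qubits, so Nayak-type lower bounds (Ambainis--Nayak--Ta-Shma--Vazirani) force $d = O(n/\gamma^2)$, giving the desired fat-shattering bound.

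With this bound in hand, I would invoke a standard uniform convergence theorem for $[0,1]$-valued function classes, for example the one of Bartlett--Long or Alon--Ben-David--Cesa-Bianchi--Haussler, which guarantees that
\[
m \;=\; \Omega\!\left(\frac{1}{\varepsilon^2}\left(\operatorname{fat}_{\gamma\varepsilon/c}(\mathcal{F}_n)\,\log^2\tfrac{1}{\gamma\varepsilon} \;+\; \log\tfrac{1}{\delta}\right)\right)
\]
samples suffice so that, with probability $1-\delta$, every function agreeing with $f_\rho$ to within $\eta \le \gamma\varepsilon/7$ on the training set also agrees within $\varepsilon$ on a fresh draw from $\mathcal{D}_n$. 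Substituting $\operatorname{fat}_{\gamma\varepsilon/c}(\mathcal{F}_n) = O(n/(\gamma\varepsilon)^2)$ and absorbing the universal constants into $K$ reproduces exactly the sample complexity in the theorem statement.

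The main obstacle, as one would expect, is the fat-shattering bound itself. A naive volumetric or packing argument in the $4^n$-dimensional space of Hermitian operators yields only exponential-in-$n$ bounds, which are useless here. The sharp linear-in-$n$ scaling genuinely requires the quantum random access code reduction above, and some care is needed to track the $1/\gamma^2$ dependence through Nayak's bound --- this is precisely the step that takes the proof beyond classical PAC learning and uses quantum-specific communication lower bounds. Everything else (translating generalization of the induced function class back into a statement about states and POVMs, and bookkeeping the constant $K$) is routine.
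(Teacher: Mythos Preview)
The paper does not prove this theorem; it is stated in the preliminaries section as a cited result from Aaronson~\cite{Aaronson_2007} and used as background. Your sketch is essentially Aaronson's own argument: bound the $\gamma$-fat-shattering dimension of $E\mapsto\Tr(E\rho)$ by $O(n/\gamma^2)$ via a reduction to quantum random access codes (Nayak's bound), then invoke a standard real-valued uniform convergence theorem (Bartlett--Long / Anthony--Bartlett / Alon et al.) to convert the fat-shattering bound into the stated sample complexity. So your approach is correct and matches the original source, even though there is nothing in the present paper to compare it against.
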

Roughly speaking, the theorem asserts that if a hypothesis state $\sigma$ is found that accurately aligns with the expectation values in the training set, and if the size of the training set is at least $m=O(n)$, then it is highly probable that the hypothesis state $\sigma$ will maintain its consistency even when confronted with previously unseen POVM elements.
However, it is important to note that finding such a quantum state $\sigma$ that matches the training set expectation values can be formulated as a SDP, which generally has an inefficient computational runtime of $O(\mathrm{poly}(2^n))$. Additionally, even if we manage to find such a state $\sigma$, it may require exponential classical memory to store it. Moreover, making predictions on this state, i.e.,  computing expectation values for unseen POVM elements, can also pose computational challenges.
To address these concerns, Yoganathan \cite{yoganathan2019condition} introduced the concept of time-efficient PAC learning for a given class of quantum states $\mathcal{C}$ under a set of measurements $\mathcal{M}$, which can be stated as follows:

\begin{definition} 
[Efficiently PAC-learnable class \cite{yoganathan2019condition}]
\label{def:EffPAClearn}
Let $\mathcal{C}$ and $\mathcal{M}$ be as described previously, and assume that there exists an efficient classical representation for every $E\in\mathcal{M}_n$, allowing $E$ to be described using $O(\mathrm{poly}(n))$ classical bits. We say that $\mathcal{C}$ is efficiently learnable with respect to $\mathcal{M}$ if there exists a pair of classical algorithms, $L_1$ and $L_2$, that satisfy the following properties:
\begin{itemize}
\item \textbf{($L_1$, SDP feasibility algorithm):} For any $\rho \in \mathcal{C}_n$, $\eta>0$, and any training set $T=\{(E_i,\Tr(E_i \rho))\}_{i\in[m]}$, $L_1$ outputs an $O(\mathrm{poly}(n))$ bit classical description of a hypothesis state $\sigma$ that satisfies the following approximate feasibility problem:

\begin{gather}\label{eq:learningSDP}
\nonumber |\Tr(E_i  \sigma) - \Tr(E_i \rho) | \leq \eta \quad \text{for all} \enspace i \in [m], \\
\sigma \succeq 0, \\ 
\nonumber \Tr(\sigma) = 1.  
\end{gather}

$L_1$ runs in time $O(\mathrm{poly}(n,m,1/\eta))$.

\item \textbf{($L_2$ simulation algorithm):} For every $n$-qubit state $\sigma$ that is an output of $L_1$, and for every $E\in \mathcal{M}_n$, $L_2$ computes $\Tr(E \sigma)$. $L_2$ runs in time $O(\mathrm{poly}(n))$.
\end{itemize}
\end{definition}

This definition ensures that a class of states $\mathcal{C}$ with respect to the set of measurements $\mathcal{M}$ is considered time-efficiently PAC-learnable if it allows for the efficient derivation of a classical description of a quantum state $\sigma$ that aligns with the expectation values in the training set. Furthermore, the classical description of the quantum state $\sigma$ should enable the efficient computation of expectation values for previously unseen POVM elements.
As mentioned, it has been shown that the class of stabilizer states~\cite{rocchetto2018stabiliser} is efficiently PAC learnable in the context of Pauli measurements, and in 
Ref.\ \cite{yoganathan2019condition}, it has been established that the class of low-entangled states can be efficiently PAC-learned under low Schmidt rank measurements~\cite{yoganathan2019condition}. Here, we address the question of whether fermionic Gaussian states, which are also efficiently classically simulable, can be efficiently PAC-learned. Our work demonstrates that ``proper"-PAC learning of fermionic Gaussian states is \texttt{NP}-hard under Majorana or Pauli measurements. Here, ``proper" implies that the hypothesis state $\sigma$ must belong to the same class as the state $\rho$ from which the training set is originated.

To formalize proper-PAC learning, we define it as follows (noting that the only distinction from the previous definition~\ref{def:EffPAClearn} is the requirement that the hypothesis state $\sigma$ belongs to the same class as $\rho$):

\begin{definition} 
\textit{(Proper-efficiently PAC-learnable class)} 
Let $\mathcal{C}_n$ represent a class of $n$-qubit quantum states (e.g., fermionic Gaussian states), with $\mathcal{C}=\bigcup_n \mathcal{C}_n$. Let $\mathcal{M}_n$ be a set of two-outcome POVMs on $n$-qubits, and $\mathcal{M}=\bigcup_n \mathcal{M}_n$ (e.g., POVMs associated with Majorana or Pauli operators). Assuming there exists an efficient classical representation for every $E\in\mathcal{M}_n$, allowing $E$ to be described using $O(\mathrm{poly}(n))$ classical bits, we declare that $\mathcal{C}$ is efficiently proper-PAC-learnable with respect to $\mathcal{M}$ if a pair of classical algorithms $L_1$ and $L_2$ satisfy the following properties:

\begin{itemize}
\item \textbf{($L_1$ proper SDP feasibility algorithm):}  For any $\rho \in \mathcal{C}_n$, $\eta>0$, and any training set $T=\{(E_i,\Tr(E_i \rho))\}_{i\in[m]}$, $L_1$ outputs a $O(\mathrm{poly}(n))$ bit classical description of a hypothesis state $\sigma \in \mathcal{C}_n$ (which is the same class of states from which $\rho$ comes from) that satisfies the following approximate feasibility problem:

\begin{gather}\label{eq:learningSDP}
\nonumber |\Tr(E_i  \sigma) - \Tr(E_i \rho) | \leq \eta \quad \text{for all} \enspace i \in [m], \\
\sigma \succeq 0, \\ 
\nonumber \Tr(\sigma) = 1.  
\end{gather}

$L_1$ runs in time $O(\mathrm{poly}(n,m,1/\eta))$.

\item \textbf{($L_2$ simulation algorithm):} For every $n$-qubit state $\sigma \in \mathcal{C}_n$ output of $L_1$ and for every $E\in \mathcal{M}_n$, $L_2$ computes $\Tr(E \sigma)$. $L_2$ runs in time $O(\mathrm{poly}(n))$.
\end{itemize}
\end{definition}
In a related context, Liang has demonstrated a hardness proof~\cite{Liang_2023} using the concept of proper PAC-learning, specifically in the context of PAC-learning of unitaries \cite{Caro_2020}. His work showed that properly PAC-learning Clifford circuits under Pauli measurements is not efficient, relying on a widely believed assumption in complexity theory.

\section{Proof of the main results}\label{sec:proofs}

We are provided with a training set of Majorana product expectation values estimates (or Pauli expectation values estimates, due to Jordan-Wigner) that arise from an unknown quantum state $\rho$. The question is whether there exists a fermionic Gaussian state $\sigma \in \mathcal{S}_{\mathrm{FG}}$ capable of accurately matching this data. Here, $\mathcal{S}_{\mathrm{FG}}$ denotes the set of fermionic Gaussian states.
Let $\mathcal{S}$ a set of ordered indices $\mathcal{S}\ni S\subseteq [2n]$ encoding Majorana product operators such that their expectation values are given in the training set.  We formalize this problem as follows.

\begin{problem}[$\class{Gaussian\,Consistency}$ problem]\label{prob:pac_approx_dec} 
Problem:
    \begin{itemize}
        \item \textbf{Input:} A training set
            \(\mathcal{T} = \{(S, \left<\gamma_{S}\right>_T)\}_{S\in\mathcal{S}}\) and accuracy parameters $\varepsilon_1,\varepsilon_2 >0$.
            \item \textbf{Promise:} {The training set is obtained from a physical state $\rho$, where $\left<\gamma_{S}\right>_{T} = \langle\gamma_S\rangle_{\rho}$.}
        
        \item \textbf{Output:} 
            \begin{itemize}
                \item \textbf{YES:} There exists a state \(\sigma \in S_{\mathrm{FG}}\) such that 
                \begin{equation}
                    \forall S \in \mathcal{S}: \quad |\left<\gamma_{S}\right>_T- \langle\gamma_S\rangle_{\rho}| \le \varepsilon_1.
                \end{equation}
                
                \item \textbf{NO:} For all states \(\sigma \in S_{\mathrm{FG}}\), 
                \begin{equation}
                    \exists S \in \mathcal{S}: \quad |\left<\gamma_{S}\right>_T- \langle\gamma_S\rangle_{\rho}| > \varepsilon_2.
                \end{equation}
            \end{itemize}
    \end{itemize}
\end{problem}
The inclusion of the problem in \class{NP} is straightforward for all $\varepsilon_1,\varepsilon_2\geq 0$: the correlation matrix serves as the proof, and the expectation values can be calculated efficiently using Wick's theorem. Thus, we will omit this step in the following completeness proofs.
Notice that Problem~\ref{prob:GaussianConsMAIN} is recovered for $\varepsilon=\varepsilon_1=\varepsilon_2$. The complexity of the problem depends on the values $\varepsilon_1$ (completeness) and $\varepsilon_2$ (soundness). In practise, $\left<\gamma_{S}\right>$ may originate from some experiment and is, therefore,  always subject to shot-noise errors which can be accounted for here. If we set $\varepsilon_1=0$, this corresponds to the exact case. Generally, this makes the problem easier and, therefore,  showing hardness with $\varepsilon_1=0$ shows that the complexity is independent of the shot-noise. 
The difference $\varepsilon_2-\varepsilon_1$ shows the accuracy of the algorithm. Hardness, therefore, implies that (assuming $\class{P}\neq \class{NP}$), no polynomial time algorithm can PAC-learn all states up to $\varepsilon_2-\varepsilon_1$ precision.

Now, we mention the high-level idea of our proofs. We know that any fermionic Gaussian state is uniquely determined by its correlation matrix, which is an anti-symmetric real matrix with eigenvalues in absolute value less than one. The correlation matrix is filled with the $2$-point correlation functions, while each higher-order correlation function provides a Pfaffian constraint on the submatrices of the correlation matrix by Wick's theorem. Therefore, our problem of finding a fermionic Gaussian state consistent with the data is equivalent to solving a matrix completion problem (for an anti-symmetric real matrix) with Pfaffian constraints on the submatrices and (linear) constraints on the eigenvalues. Thus, the proof strategy is the one of encoding $3$-SAT istances into such non-linear optimization problem.

It is important to stress that the hardness of the various precisely formulated problems above stems from the fact that non-linear constraints are being imposed in terms of higher order correlations. When imposing constraints arising from mere 2-point correlation functions, one does 
not encounter computationally difficult problems.
Specifically, e.g.,
if one has an index set
$I$ given and matrix entries
$\Xi_{j,k}$, then for every
$\varepsilon>0$, 
one can in polynomial
time find a fermionic Gaussian state $\sigma$
that fulfills
\begin{equation}
|\langle \gamma_j \gamma_k \rangle_\sigma - \Xi_{j,k}|\leq \varepsilon
\end{equation}
for all $(j,k)\in I$, if it exists. This amounts to a simple efficient semi-definite feasibility problem that provides a fermionic Gaussian state $\sigma$, characterized by its
real and anti-symmetric correlation matrix $\Gamma(\sigma)$. This is because the constraint that a skew-symmetric matrix has eigenvalues smaller than or equal to unity can be readily written as a semi-definite constraint as  
\begin{equation}
\label{eq:semidefconst}
I \succeq i \Gamma(\sigma)
\end{equation}
In other words, unsurprisingly, one can computationally efficiently see whether a given Gaussian state is compatible with a collection of entries of the correlation matrix, in a reading of a \emph{Gaussian matrix completion problem}.
\subsection{\texttt{NP}-completeness of \(\class{Gaussian\,Consistency}\) (6-body correlators)}

To establish the \class{NP}-completeness, we endeavor to reduce it to the well-known $3$-SAT problem (Problem~\ref{prob:3sat}). Our objective is to transform instances of the $3$-SAT problem into a format compatible with the input requirements of the $\class{Gaussian\,Consistency}$ problem, allowing us to infer the corresponding output for the $3$-SAT problem from the output of the Gaussian Consistency problem.
To begin, we demonstrate \(\class{NP}\)-completeness in scenarios where the training set comprises solely \(k\)-point correlation functions (with \(k\leq 6\)), and subsequently extend this to training sets which consist solely of $k$-point correlation functions (with $k\leq 4$).
\begin{theorem}[\texttt{NP}-completeness of \(\class{Gaussian\,Consistency}\) (6-body correlators)]
    The \(\class{Gaussian\,Consistency}\) problem is \(\class{NP}\)-complete when the training set consists of \(k\)-point correlation functions with \(k \leq 6\) and accuracy error $\varepsilon_2\le 1.6\%$ independent of the number of modes.
\end{theorem}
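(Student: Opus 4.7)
The plan is to promote the exact (\(\varepsilon=0\)) reduction sketched for Theorem~\ref{th:1main} into a gap-preserving reduction from 3-SAT, yielding \(\class{NP}\)-hardness at the explicit constant soundness threshold \(\varepsilon_2\le 1.6\%\). The construction is reused: \(2\nx\) Majorana modes, a \(4\nx \times 4\nx\) block-diagonal correlation matrix \(\Gamma = \Gamma_1 \oplus \dots \oplus \Gamma_{\nx}\), with each \(4\times 4\) block \(\Gamma_j\) of the form of Eq.~\eqref{eq:6corrMAIN} and free parameters \(x_j^0, x_j^1 \in [0,1]\). The training set will consist of (i) all \(2\)-point correlators fixing every entry of each \(\Gamma_j\) apart from \(x_j^0\) and \(x_j^1\); (ii) one \(4\)-point correlator per variable, \(\langle \gamma_{j,1}\gamma_{j,2}\gamma_{j,3}\gamma_{j,4}\rangle = 1/\Phi^2\); and (iii) one \(6\)-point correlator per clause, set to \(0\) as in the sketch.

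First I would fix \(\Phi\) as a constant large enough that the block \(\Gamma_j\) in Eq.~\eqref{eq:6corrMAIN} has eigenvalues of modulus at most \(1\) uniformly over \((x_j^0,x_j^1)\in[0,1]^2\); since the blocks decouple, this reduces to a finite-dimensional spectral check independent of \(\nx\). Completeness is then immediate: given a satisfying assignment \(x\in\{0,1\}^{\nx}\), setting \(x_j^0:=x_j\) and \(x_j^1:=1-x_j\) and applying Wick's theorem (Lemma~\ref{le:Wick}) reproduces every training value exactly, so the YES instance is realized by a Gaussian state with \(\varepsilon_1=0\).

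The main obstacle is the soundness direction with a \emph{constant} gap. I would argue via a robust version of Wick's theorem: if a Gaussian state \(\sigma\) satisfies every training value within \(\varepsilon_2\), the \(2\)-point constraints in (i) pin every entry of each block of \(\Gamma(\sigma)\) to its target value within \(\varepsilon_2\), while the \(4\)-point constraint in (ii), expanded as a \(4\times 4\) Pfaffian over the entries of \(\Gamma_j(\sigma)\), forces \(x_j^1 = 1 - x_j^0\) up to an additive error that is linear in \(\varepsilon_2\) and in the (fixed) block constants. Substituting this into the \(6\)-point Pfaffian of each clause, whose nonzero perfect matchings decouple along the block-diagonal structure, reduces the constraint to \(c_\Phi\, x_{i_1}^{s_1} x_{i_2}^{s_2} x_{i_3}^{s_3} = O(\varepsilon_2)\) for an explicit nonzero rational \(c_\Phi\) determined by \(\Phi\) and the constant block entries. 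Rounding \(x_j:=\mathbf{1}[x_j^0\ge 1/2]\) then yields a Boolean candidate assignment, and tracking how a worst-case \(O(\varepsilon_2)\) slack propagates through (a) the \(4\)-point Pfaffian relation, (b) the \(6\)-point Pfaffian expansion (including cross-terms whose magnitude is controlled by \(\Phi\)), and (c) the rounding step, yields an explicit threshold below which every rounded clause must be true. Balancing these contributions is exactly the numerical bookkeeping that produces the \(1.6\%\) figure, and it is the delicate part of the argument.

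Membership in \(\class{NP}\) is immediate and already noted after Problem~\ref{prob:pac_approx_dec}: the correlation matrix \(\Gamma(\sigma)\) is a polynomial-size witness, and every Pfaffian appearing in the training constraints involves an at-most \(6\times 6\) submatrix and is therefore computable in constant time per constraint, so the verifier runs in polynomial time in the instance size. Combining containment in \(\class{NP}\) with the polynomial-time reduction from 3-SAT above gives the claimed \(\class{NP}\)-completeness.
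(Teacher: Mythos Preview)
Your approach matches the paper's: same block structure, same 2-, 4-, and 6-point constraints, same soundness via propagating $\varepsilon_2$-slack through the 4-point Pfaffian (giving $|x_j^0+x_j^1-1|\le\varepsilon_x$) and the 6-point Pfaffian (giving $|\prod_k x_{i_k}^{s_k}|\le\varepsilon_l$), then rounding and requiring $\bigl(\tfrac{1-\varepsilon_x}{2}\bigr)^3\ge\varepsilon_l$. The paper rounds via $x_j=\mathrm{True}$ iff $x_j^1\ge x_j^0$ rather than your $x_j^0\ge 1/2$, but either works once $x_j^0+x_j^1\approx 1$ is established.

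Two concrete gaps to close. First, your training set as listed omits the off-block 2-point constraints $\langle\gamma_{j,a}\gamma_{k,b}\rangle_T=0$ for $j\ne k$. Without them, in soundness the twelve off-block entries of the $6\times 6$ clause submatrix are unconstrained, so the fourteen non-leading Pfaffian terms do \emph{not} ``decouple along the block-diagonal structure'' and cannot be bounded; the paper explicitly adds these $O(\nx^2)$ constraints and uses them to obtain $\varepsilon_l=(\varepsilon_2+14\varepsilon_2^2)\Phi^3$. Second, taking $\Phi$ merely ``large enough'' for spectral validity will not produce the $1.6\%$ figure: since $\varepsilon_x=(3\varepsilon_2+\varepsilon_2^2)\Phi^2$ and $\varepsilon_l\propto\Phi^3$, the threshold degrades as $\Phi$ grows, and the paper takes the \emph{minimal} valid value $\Phi=\tfrac{1+\sqrt{5}}{2}$ to land on $1.6\%$.
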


\begin{proof}

We aim to demonstrate a polynomial-time reduction from the 3-SAT problem to the \(\class{Gaussian\, Consistency}\) problem for up to 6-point correlations, thereby establishing NP-completeness.
Thus, we construct a training set for the input of the \(\class{Gaussian\, Consistency}\) problem that embeds the \(3\)-SAT clauses.
For this, we first fix some constraints on our correlation matrix. We consider a system of \(2\nx\) fermionic modes to which is associated a \(4\nx \times 4 \nx\) correlation matrix \(\Gamma\), where $n_x$ is the number of $3$-SAT variables. We begin by imposing a particular structure for the \(j\)-th \(4 \times 4\) diagonal block, denoted \(\Gamma_{j,\cdot;j,\cdot} \equiv {\Gamma\rvert_{\{j\}}}\), within the correlation matrix \(\Gamma\) for \(j \in [\nx]\), as 
\begin{align}
    \Gamma_{j,\cdot;j,\cdot} = \frac{1}{\Phi}\begin{pmatrix}
    0 & x_j^0 & 0 & x_j^1 \\
    -x_j^0 & 0 & 1 & 0 \\
    0 & -1 & 0 & 1 \\
    -x_j^1 & 0 & -1 & 0\\
    \end{pmatrix},
    \label{eq:1}
\end{align}
where \(\Phi\) is a constant to be determined later, and \(x_j^0, x_j^1 \in [0,1]\) are free parameters related to the \(3\)-SAT variables. 
We can enforce \(\Gamma_{j,\cdot;j,\cdot}\) to adhere to the above form by introducing the 
constraints
\begin{align}
    \left< \gamma_{j,a}\gamma_{j,b}\right>_T = \Gamma_{j,a;j,b},\label{eq:block_constraints}
\end{align} 
for all \(a<b \in [4]\) except for \((a,b)=(1,2),(1,4)\). Now we adopt the following notation $\gamma_{j,k}:=\gamma_{4(j-1)+k}$, for all $j\in [\nx]$, $k\in [4]$. 
Additionally, we demand that
\begin{align}
\left<\gamma_{j,1}\gamma_{j,2}\gamma_{j,3}\gamma_{j,4}\right>_T=\frac{1}{\Phi^2},
\end{align}
for all $j\in[\nx]$.
By Wick's theorem, we have
\begin{align}
\left<\gamma_{j}\gamma_{j+1}\gamma_{j+2}\gamma_{j+3}\right> = \Pf({\Gamma\rvert_{\{j\}}}) =(\Gamma\rvert_{\{j\}})_{1,2}(\Gamma\rvert_{\{j\}})_{3,4}+(\Gamma\rvert_{\{j\}})_{2,3}(\Gamma\rvert_{\{j\}})_{1,4}=\frac{1}{\Phi^2}(x^0_j+x^1_j)\,.
\end{align}
Hence, we get that $x_j^1=1-x_j^0$. 
We require the correlation matrix $\Gamma$ to have the block diagonal form
    \begin{align}
        \Gamma=\Gamma\rvert_{\{1\}}\oplus \cdots\oplus \Gamma\rvert_{\{\nx\}},
    \end{align}
by demanding all the elements of $\Gamma$ outside the diagonal blocks to be zero, which means that in the training set all the values outside the block-diagonal are demanded to be zero. In formulae, this is 
\begin{align}
    \left< \gamma_{j,a}\gamma_{k,b}\right>_T=0 \quad \forall j\neq k\in[\nx];\, a,b\in [4]\,.
\end{align}
We are now prepared to encode the $n_c$ $3$-SAT clauses. We have
\begin{align}
    \forall c \in [n_c]: \quad s_{1,c} x_{i_1(c)} \lor s_{2,c} x_{i_2(c)} \lor s_{3,c} x_{i_3(c)},
\end{align}
where \(\vec i \in [\nx]^{3,n_c}\) represents a list of indices $i_{j}(c)$ for $c\in[n_c]$ and $j\in[3]$, and \(s \in \{\cdot, \lnot\}^{3,n_c}\) is a negation table. Here, $[\nx]^{3,n_c}$ means that we take $n_c$ subsets of $3$ elements belonging to $[\nx]$. Similarly for \(s \in \{\cdot, \lnot\}^{3,n_c}\).
For \(c \in [n_c]\), we define \(s_j(c) \in \{0,1\}\) such that \(s_j(c) = 1\) if \(s_{j,c} = \lnot\), and \(s_j(c) = 0\) otherwise for $j\in[3]$. With the above definitions, we thus have the following notation for Majorana operators $\gamma_{i_j(c),k}=\gamma_{4(i_j(c)-1)+k}$ for all $j\in [\nx]$ and $k\in[4]$. For each clause, we require the following six-point correlators to be zero as
\begin{align}
 C_{\vec i(c),\vec s(c)}\coloneqq\left<\gamma_{i_1(c),1}\gamma_{i_1(c),2+2s_1(c)}\gamma_{i_2(c),1}\gamma_{i_2(c),2+2s_2(c)}\gamma_{i_3(c),1}\gamma_{i_3(c),2+2s_3(c)}\right> = 0,
    \label{eq:6corr2}
\end{align}
for all \(c \in [n_c]\). By Wick's theorem, these correlators correspond to the Pfaffian of a block diagonal $6 \times 6 $ sub-matrix, as 
\begin{align}
    C_{\vec i(c),\vec s(c)}&\coloneqq \operatorname{Pf}\left(\frac{1}{\Phi}\bigoplus_{j = 1}^{3} \begin{pmatrix} 0 &  x^{s_j(c)}_{i_j(c)} \\ -x^{s_j(c)}_{i_j(c)} & 0 \end{pmatrix}\right)  \\
    &= \frac{1}{\Phi^3}\prod^3_{j=1} x_{i_j(c)}^{s_j(c)},
    \nonumber
\end{align}
where we have defined ${\vec i}(c)\coloneqq\{i_1(c), i_2(c), i_3(c)\}$ and ${\vec s}(c)\coloneqq \{s_1(c), s_2(c), s_3(c)\}$.
Hence, with these imposed six-point correlators, we encode the conditions
\begin{align}
    x_{i_1(c)}^{s_1(c)}x_{i_2(c)}^{s_2(c)}x_{i_3(c)}^{s_3(c)} = 0,
    \label{eq:constr}
\end{align}
for all \(c \in [n_c]\). 
Additionally, we set $\Phi = \frac{1 + \sqrt{5}}{2}$ to constrain the eigenvalues of $\Gamma$, ensuring it is a valid correlation matrix. { The chosen $\Phi$ corresponds to the smallest prefactor that still allows for the correlation to be of a valid quantum state. Recall that for a correlation matrix to be valid, its singular values must be less than or equal to one (see Eq.~\ref{eq:semidefconst}).}

\paragraph{Completeness:} To show completeness, we need to show that when the $3$-SAT instance is satisfiable, then we can also construct a Gaussian state which satisifies all the expectation values. For this we define $(x_j^0,x_j^1)=(0,1)$, if $x_j=\mathrm{True}$ and $(x_j^0,x_j^1)=(1,0)$, if $x_j=\mathrm{False}$. By direct calculation, we can verify that $\Gamma\rvert_{\{j\}}$ describe valid correlation matrices and, therefore, so does the entire $\Gamma$. Also it follows that 
\begin{align}
    s_{1,c} x_{i_1(c)}\lor s_{2,c} x_{i_2(c)}\lor s_{3,c} x_{i_3(c)}=\mathrm{True}\Longleftrightarrow x_{i_1(c)}^{s_1(c)}x_{i_2(c)}^{s_2(c)}x_{i_3(c)}^{s_3(c)} = 0\,
\end{align}
which completes the completeness analysis.
\paragraph{Soundness:} Here we need to show that if the $3$-SAT instance does not have a solution, then there cannot exist a correlation matrix which satisifies all expectation values up to $\varepsilon_2$ precision. To see this we show that for sufficiently small $\varepsilon_2$, a valid correlation matrix would imply also a solution to the $3$-SAT instance. For this, we define the Boolean variables
\begin{align}
\label{eq:rules}
    x_i=\begin{cases}
        \mathrm{True} & x^1_j\geq x^0_j,\\
        \mathrm{False} & x^1_j< x^0_j,
    \end{cases}
\end{align}
where we have defined $x_j^s\coloneqq\Phi \Gamma_{j,1;j,2+2s}$ for $j\in[\nx]$ and $s\in \{0,1\}$.
In the exact case we have $x_j^0+x_j^1=1$. Due to tolerances in the expectation values, this now only holds approximately
\begin{equation}
        |x_j^0+x_j^1-1|\leq \varepsilon_x\,.
\end{equation}
Now we impose 
\begin{align}
    \left<\gamma_{j,1}\gamma_{j,2}\gamma_{j,3}\gamma_{j,4}\right> &= \mathrm{Pf}(\Gamma\rvert_{\{j\}})
    = \frac{1}{\Phi}(x_j^0 \Gamma_{j,3;j,4}+  x_j^1 \Gamma_{j,2;j,3})-\Gamma_{j,1;j,3}\Gamma_{j,2;j,4},
\end{align}   
as well as all the relevant measurement constraints
\begin{align}
    \left|\left<\gamma_{j,1}\gamma_{j,2}\gamma_{j,3}\gamma_{j,4}\right>-\frac{1}{\Phi^2}\right|,\,
    \left|\Gamma_{j,3;j,4}-\frac{1}{\Phi}\right|,\,
    \left|\Gamma_{j,2;j,3}-\frac{1}{\Phi}\right|,\,
    \left|\Gamma_{j,1;j,3}\right|,\,
    \left|\Gamma_{j,2;j,4}\right|&\leq \varepsilon_2\,.
\end{align}
Our objective is to find a suitable $\varepsilon_x$. Using triangle inequalities, we get
\begin{align}
    \left|\frac{1}{\Phi}(x_j^0 \Gamma_{j,3;j,4}+  x_j^1 \Gamma_{j,2;j,3})-\Gamma_{j,1;j,3}\Gamma_{j,2;j,4}-\frac{1}{\Phi^2}\right|&\leq \varepsilon_2,\\
    \left|\frac{1}{\Phi}(x_j^0 \Gamma_{j,3;j,4}+  x_j^1 \Gamma_{j,2;j,3})-\frac{1}{\Phi^2}\right|-\left|\Gamma_{j,1;j,3}\Gamma_{j,2;j,4}\right|&\leq \varepsilon_2,\\
    \left|\frac{1}{\Phi}(x_j^0 \Gamma_{j,3;j,4}+  x_j^1 \Gamma_{j,2;j,3})-\frac{1}{\Phi^2}\right|-\varepsilon_2^2&\leq \varepsilon_2,\\
    \left|\frac{1}{\Phi^2}(x_j^0+  x_j^1)-\frac{1}{\Phi^2}+\frac{x_j^0}{\Phi} (\Gamma_{j,3;j,4}-\frac{1}{\Phi})+ \frac{x_j^1}{\Phi}  (\Gamma_{j,2;j,3}-\frac{1}{\Phi})\right|&\leq \varepsilon_2+\varepsilon_2^2,\\
    \left|\frac{1}{\Phi^2}(x_j^0+  x_j^1)-\frac{1}{\Phi^2}\right|-\left|\frac{x_j^0}{\Phi} (\Gamma_{j,3;j,4}-\frac{1}{\Phi})\right|-\left|\frac{x_j^1}{\Phi}  (\Gamma_{j,2;j,3}-\frac{1}{\Phi})\right|&\leq \varepsilon_2+\varepsilon_2^2,\\
    \frac{1}{\Phi^2}|x_j^0+x_j^1-1|&\leq 3\varepsilon_2+\varepsilon_2^2.
\end{align}
Here, we have used that $x_j^\alpha/\Phi\leq 1$, since it is an entry of the correlation matrix. Therefore,  we can choose $\varepsilon_x=(3\varepsilon_2+\varepsilon_2^2)\Phi^2$.
Similarly, for the clause we also get an error
\begin{align}
    \left| {x}^{s_1(c)}_{i_1(c)}{x}^{s_2(c)}_{i_2(c)}{x}^{s_3(c)}_{i_3(c)}\right|\leq \varepsilon_l,
        \label{eq:3satnoisy}
\end{align}
which we can derive by using
\begin{align}
C_{\vec{i}(c),\vec{s}(c)}=\Pf(\Gamma_{{\vec i}(c),{\vec s}(c)})\quad \textrm{and}\quad 
    \left|C_{\vec{i}(c),\vec{s}(c)}\right|\leq \varepsilon_2,
\end{align}
where we recall that
\begin{align}
     C_{\vec i(c),\vec s(c)}\coloneqq\left<\gamma_{i_1(c),1}\gamma_{i_1(c),2+2s_1(c)}\gamma_{i_2(c),1}\gamma_{i_2(c),2+2s_2(c)}\gamma_{i_3(c),1}\gamma_{i_3(c),2+2s_3(c)}\right>.
\end{align}
Now our objective is to find $\varepsilon_{l}$. The Pfaffian of a $6 \times 6$ matrix contains $15$ terms. Besides the values $\Gamma_{{i}_{k}(c),2+2s_{k}(c)}$ for $ k\in [3]$, all values are at most $\varepsilon_2$ in size (by the expectation value constraints). Therefore,  except for the term $\prod_{k\in [3]}\Gamma_{{i}_{k}(c),2+2s_{k}(c)}=\frac{1}{\Phi^3}{x}^{s_1(c)}_{i_1(c)}{x}^{s_2(c)}_{i_2(c)}{x}^{s_3(c)}_{i_3(c)}$, all other terms are at most $\varepsilon_2^2$ in magnitude.
Thus, we have
\begin{align}
    \left|\frac{{x}^{s_1(c)}_{i_1(c)}{x}^{s_2(c)}_{i_2(c)}{x}^{s_3(c)}_{i_3(c)}}{\Phi^3}\right|&\leq |\Pf(\Gamma_{{\vec i}(c),{\vec s}(c)})|+\left|\Pf(\Gamma_{{\vec i}(c),{\vec s}(c)})-\frac{{x}^{s_1(c)}_{i_1(c)}{x}^{s_2(c)}_{i_2(c)}{x}^{s_3(c)}_{i_3(c)}}{\Phi^3}\right|\\
    &\leq \varepsilon_2+ 14\varepsilon_2^2
    \nonumber
\end{align}
which gives $\varepsilon_l=(\varepsilon_2+ 14\varepsilon_2^2)\Phi^3$.
Lastly, we need that 

\begin{align}
    s_{1,c} x_{i_1(c)}\lor s_{2,c} x_{i_2(c)}\lor s_{3,c} x_{i_3(c)}=\mathrm{False} \Rightarrow 
    \left| {x}^{s_1(c)}_{i_1(c)}{x}^{s_2(c)}_{i_2(c)}{x}^{s_3(c)}_{i_3(c)}\right|\geq \varepsilon_l\,,
\end{align}
since this implies that there cannot exist a valid state corresponding to a false proof string for the $3$-SAT instance. We can use that 
\begin{align}
    \mathrm{max}(x_j^0,x_j^0)\geq \frac{1-\varepsilon_x}{2},
\end{align}
which follows from ${x}^{0}_{j}+{x}^{1}_{j}\geq 1- \varepsilon_x$.
Since this needs to hold for all three variables in the clause, we obtain the expression
\begin{align}
    \left(\frac{1-\varepsilon_x}{2}\right)^3\geq \varepsilon_l
\end{align}
which holds for $\varepsilon_2\leq 1.6\%$.

\end{proof}

\subsection{\texttt{NP}-Completeness of \(\class{Gaussian\,Consistency}\) (4-body correlators)}

\begin{theorem}[\texttt{NP}-completeness of \(\class{Gaussian\,Consistency}\) (4-body correlators)]
The $\class{Gaussian\, Consistency}$ problem is $\class{NP}$-complete when the training set consists of $k$-point correlation functions with $k \leq 4$ and accuracy error $\varepsilon_2 \le 0.069\% $ independent from the number of modes.
\end{theorem}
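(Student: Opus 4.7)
The plan is to reduce 3-SAT to the \(\class{Gaussian\,Consistency}\) problem using only up to 4-body correlators, by adapting the 6-body construction with auxiliary Majorana modes that let us chain two quadratic Pfaffian constraints into one effective cubic constraint. Membership in \(\class{NP}\) is immediate (as noted after Problem~\ref{prob:pac_approx_dec}), so I focus on \(\class{NP}\)-hardness.

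I would retain the canonical variable blocks from the proof of the 6-body theorem: for each \(j\in[\nx]\), a 4-mode block of the form Eq.~\eqref{eq:1}, whose intra-block 4-body Pfaffian already enforces \(x_j^0+x_j^1=1\). The new ingredient is, for each clause \(c\), to introduce two auxiliary ``clause-variable'' blocks (say for the literals at positions 1 and 2) with the same 4-mode structure and free values \(\tilde{x}^c_{i_1,0},\tilde{x}^c_{i_1,1},\tilde{x}^c_{i_2,0},\tilde{x}^c_{i_2,1}\), each constrained by its own intra-block 4-body Pfaffian. For each such copy I would impose two further 4-body Pfaffian constraints of the form \(x_{i_k}^{0}\tilde{x}^c_{i_k,1}=0\) and \(x_{i_k}^{1}\tilde{x}^c_{i_k,0}=0\), obtained by choosing 4 modes (two canonical, two auxiliary) with all aux-to-canonical cross correlators fixed to zero in the training set, so that the Pfaffian collapses to the single product \(\Gamma_{12}\Gamma_{34}\). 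Combined with the two sum constraints, these two zero-product constraints force \(\tilde{x}^c_{i_k,s}=x_{i_k}^{s}\) in the exact (Boolean) case.

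Next, for each clause \(c\) I would encode the cubic product \(x_{i_1}^{s_1}x_{i_2}^{s_2}x_{i_3}^{s_3}=0\) in two 4-body steps. First, using the four modes of the two clause-variable blocks with the choice of cross correlators \(\Gamma_{\tilde{\gamma}^c_{i_1,2+2s_1},\tilde{\gamma}^c_{i_2,2+2s_2}}=1/\Phi'\), \(\Gamma_{\tilde{\gamma}^c_{i_1,1},\tilde{\gamma}^c_{i_2,2+2s_2}}=\Gamma_{\tilde{\gamma}^c_{i_1,2+2s_1},\tilde{\gamma}^c_{i_2,1}}=0\), and \(\Gamma_{\tilde{\gamma}^c_{i_1,1},\tilde{\gamma}^c_{i_2,1}}=y_c/\Phi'\) left free, the Pfaffian becomes \((\tilde{x}^c_{i_1,s_1}\tilde{x}^c_{i_2,s_2}-y_c)/\Phi'^2\), and setting this 4-body correlator to zero enforces \(y_c=\tilde{x}^c_{i_1,s_1}\tilde{x}^c_{i_2,s_2}=x_{i_1}^{s_1}x_{i_2}^{s_2}\). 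Second, a 4-body correlator on the two aux modes \(\tilde{\gamma}^c_{i_1,1},\tilde{\gamma}^c_{i_2,1}\) together with the canonical pair \(\gamma_{(i_3,1)},\gamma_{(i_3,2+2s_3)}\), with all further cross correlators zeroed, yields \(y_c\cdot x_{i_3}^{s_3}/(\Phi\Phi')\); setting it to zero closes the chain and reproduces the clause. Using fresh aux modes per clause avoids the obvious conflict that multiple clauses sharing a variable pair would otherwise create for the shared free cross correlator.

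The hard part is soundness with finite accuracy \(\varepsilon_2\). Errors cascade along the chain (sum constraint \(\to\) copy-equality constraint \(\to\) product definition \(\to\) clause constraint), and at each step one must bound the introduced slack by triangle-inequality manipulations analogous to those used in the 6-body proof, being careful that extraneous Pfaffian terms (which are bounded by products of the training-set tolerance) cannot compensate for a violated clause. I would set \(\Phi\) (and \(\Phi'\)) just large enough that every entry of the constructed correlation matrix remains within the admissible range \(|\lambda|\le 1\) (Eq.~\eqref{eq:semidefconst}), write the Boolean rounding rule as in Eq.~\eqref{eq:rules} applied to the canonical block, and propagate the tolerances through the four inequality chains to obtain a uniform threshold of the stated form \(\varepsilon_2\le 0.069\%\). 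The main obstacle is bookkeeping: since we now have four layers of nonlinear constraints rather than two, the constants multiply, so the admissible \(\varepsilon_2\) is noticeably smaller than in the 6-body case, and getting a clean self-consistent bound requires choosing the normalization constants \(\Phi,\Phi'\) and the fixed values (\(1/\Phi, 1/\Phi', 0\)) jointly so that no cross-term can spuriously satisfy a violated clause. Finally, physicality of the constructed correlation matrix follows as in the discussion after Theorem~\ref{th:1main} by checking that the block structure defines a valid skew-symmetric matrix with singular values bounded by one.
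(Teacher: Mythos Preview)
Your approach is sound and shares the paper's central idea---use one 4-body Pfaffian to force an auxiliary entry \(y\) to equal the product \(x_{i_1}^{s_1}x_{i_2}^{s_2}\), then a second 4-body Pfaffian to enforce \(y\cdot x_{i_3}^{s_3}=0\)---but the implementation is genuinely different. The paper introduces \emph{no} auxiliary modes at all: it lets \(y\) be the existing off-block-diagonal entry \(\Gamma_{i_1,2+2s_1;\,i_2,2+2s_2}\) of the original \(4n_x\times 4n_x\) correlation matrix, fixes three neighbouring 2-body cross-correlators (\(\Gamma_{i_1,1;i_2,1}=1/\Phi\), two others \(=0\)) so that the first 4-body Pfaffian reads \((x_{i_1}^{s_1}x_{i_2}^{s_2}-y)/\Phi^2\), and then uses \(\gamma_{i_1,2+2s_1}\gamma_{i_2,2+2s_2}\gamma_{i_3,1}\gamma_{i_3,2+2s_3}\) for the second step. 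Because clauses now share the same canonical rows, the operator-norm bound is obtained by restricting to the variant of 3-SAT in which every variable occurs in at most four clauses (Tovey) and applying Gershgorin, yielding \(\Phi=(9+\sqrt{5})/2\). Your route instead spends fresh auxiliary modes per clause plus a copy-equality layer; this makes the operator-norm bound trivial (everything is block-diagonal with bounded block size) and sidesteps any worry about cross-clause conflicts, at the price of one more stage of error propagation. Consequently the paper's soundness chain has only three inequalities (the sum constraint, the \(y\)-defining constraint, the clause constraint), whereas yours has four, so the constant you would extract is not the stated \(0.069\%\) but something strictly smaller; if you want to match the theorem as written you should drop the auxiliary blocks and the copy layer and work directly with the off-diagonal entries, invoking bounded-occurrence 3-SAT for the norm bound.
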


\begin{proof}
To reduce the weight of the Majorana operators to 4, we modify the correlation matrix structure from the previous proof using Eq.~\eqref{eq:block_constraints}. We keep the same diagonal blocks. We recall that $\gamma_{i_j(c),k}=\gamma_{4(i_j(c)-1)+k}$ for all $j\in [\nx]$ and $k\in[4]$.
Here, \( \Phi \) is defined as a constant that will be fixed later in the proof.

Instead of employing the 6-point correlators (as in Eq.~\eqref{eq:6corr2}), for each clause $c \in [n_c]$, we impose the following conditions 
\begin{align}
\begin{aligned}
&\left<\gamma_{i_1(c),1}\gamma_{i_2(c),1}\right>_T = \frac{1}{\Phi}, \\
&\left<\gamma_{i_1(c),1}\gamma_{i_2(c),2+2s_2(c)}\right>_T = 0, \\
&\left<\gamma_{i_1(c),2+2s_1(c)}\gamma_{i_2(c),1}\right>_T = 0, \\
&\left<\gamma_{i_1(c),1}\gamma_{i_1(c),2+2s_1(c)}\gamma_{i_2(c),1}\gamma_{i_2(c),2+2s_2(c)}\right>_T = 0,\\
&\left<\gamma_{i_1(c),2+2s_1(c)}\gamma_{i_3(c),1}\right>_T = 0, \\
&\left<\gamma_{i_1(c),2+2s_1(c)}\gamma_{i_3(c),1}\right>_T = 0,\\
&\left<\gamma_{i_1(c),2+2s_1(c)}\gamma_{i_2(c),2+2s_2(c)}\gamma_{i_3(c),1}\gamma_{i_3(c),2+2s_3(c)}\right>_T = 0.
\end{aligned}
\end{align}
Let us analyze first the $\varepsilon_2=0$ case. The first four constrains lead to
 \begin{align*}
0=\langle\gamma_{i_{1}(c),1}\gamma_{i_{1}(c),2+2s_{1}(c)}\gamma_{i_{2}(c),1}\gamma_{i_2(c),2+2s_2(c)}\rangle_\sigma &= \frac{1}{\Phi^2}\mathrm{Pf}\begin{pmatrix}
0 & x_{i_1(c)}^{s_1(c)} & 1 & 0\\
-x_{i_1(c)}^{s_1(c)} & 0 & 0 & y^{s_1(c)s_2(c)}_{i_1(c)i_2(c)}\\
-1 & 0 & 0 & x_{i_2(c)}^{s_2(c)}\\
0 & -y^{s_1(c)s_2(c)}_{i_1(c)i_2(c)} & -x_{i_2(c)}^{s_2(c)} & 0
\end{pmatrix}\\
&= (x_{i_1(c)}^{s_1(c)}x_{i_2(c)}^{s_2(c)} - y^{s_1(c)s_2(c)}_{i_1(c)i_2(c)})\frac{1}{\Phi^2},
\end{align*}
where $y^{s_1(c)s_2(c)}_{i_1(c)i_2(c)}=\Phi\Gamma_{i_1(c),s_1(c);i_2(c),s_2(c)}$ is an element of the matrix to be fixed.  The last three constraints give
\begin{align*}
\left<\gamma_{i_1(c),2+2s_1(c)}\gamma_{i_2(c),2+2s_2(c)}\gamma_{i_3(c),1}\gamma_{i_3(c),2+2s_3(c)}\right> &= \frac{1}{\Phi^2}\mathrm{Pf}\begin{pmatrix}
0 & y^{s_1(c)s_2(c)}_{i_1(c)i_2(c)} & 0 & *\\
-y^{s_1(c)s_2(c)}_{i_1(c)i_2(c)} & 0 & 0 & *\\
0 & 0 & 0 & x_{i_3(c)}^{s_3(c)}\\
* & * & -x_{i_3(c)}^{s_3(c)}& 0
\end{pmatrix}\\
&= \frac{1}{\Phi^2}y^{s_1(c)s_2(c)}_{i_1(c)i_2(c)}x_{i_3(c)}^{s_3(c)}, 
\end{align*}
where the value of $(*)$ may be zero or some $y^{s_1(c^{\prime})s_2(c^{\prime})}_{i_1(c^{\prime})i_2(c^{\prime})}$ for some clause $c^{\prime}\neq c$.

\paragraph{Completeness:} For completeness, we use the same encoding as in the previous proof. We choose $\Gamma_{i_1(c),s_1(c);i_2(c),s_2(c)}=\frac{1}{\Phi}x_{i_1(c)}^{s_1(c)}x_{i_2(c)}^{s_2(c)}$ and any undefined entry in $\Gamma$ is set to $0$. By construction, this reproduces all desired expectation values. For $\Gamma$ to be a valid correlation matrix, we also require that its operator norm if bounded by $1$.
As such, we can decompose the matrix into its (block)-diagonal part 
and their off-diagonal component
\begin{align}
    \Gamma=\Gamma_{\mathrm{diag}}+\Gamma_{\mathrm{off}}.
\end{align}
From the previous analysis, we know that 
\begin{equation}
\|\Gamma_{\mathrm{diag}}\|_\infty=\frac{1+\sqrt{5}}{2\Phi}.
\end{equation}
To bound the operator norm of the off diagonal matrix, we use that the authors of Ref.\ \cite{TOVEY198485} have shown that 3-SAT is \texttt{NP}-complete, even if every Boolean variable is present only in at most 4 clauses. This means that each column only has four non vanishing entries.

{We now first recall the Gershgorin Circle Theorem: Let \( A \in \mathbb{C}^{m \times m} \) be a square matrix, and define the Gershgorin discs as $
D_i \coloneqq \{ z \in \mathbb{C} : |z - A_{ii}| \leq \sum_{j \neq i} |A_{ij}| \},$ for each $i \in [m]$.
Then, all eigenvalues of \( A \) lie within the union of these discs $\lambda(A) \subseteq \bigcup_{i=1}^m D_i.$}

{Using this theorem, and recalling the operator norm is equal to the largest eigenvalue, we find:} 
\begin{align}
\|\Gamma_{\mathrm{off}}\|_\infty \leq \max_{i\in [4 \nx]}\sum^{4 \nx}_{j=1}|(\Gamma_{\mathrm{off}})_{i,j}| \leq \frac{4}{\Phi}
\end{align}
giving
\begin{align}
    \|\Gamma\|_\infty\leq\|\Gamma_{\mathrm{diag}}\|_\infty+ \|\Gamma_{\mathrm{off}}\|_\infty\leq \frac{1}{\Phi}\left(\frac{1+\sqrt{5}}{2}+4\right)\,.
\end{align}
This means that for $\Phi=\frac{9+\sqrt{5}}{2}$, we have a valid correlation matrix.

\paragraph{Soundness:} Here we proceed to perform the error analysis.
For the diagonal components, we have the same condition  $\forall j \in [\nx]$
\begin{align}
        |x_j^0+x_j^1-1|\leq \varepsilon_x=\Phi^2(3\varepsilon_2+\varepsilon_2^2)\,.
\end{align}
Similarly, we get from the first four constraints

\begin{align*}
    \varepsilon_2 &\geq \left| \langle \gamma_{i_1(c),1} \gamma_{i_1(c),2+2s_1(c)} \gamma_{i_2(c),1} \gamma_{i_2(c),2+2s_2(c)} \rangle \right|, \\
    \varepsilon_2 &\geq \left| \frac{1}{\Phi^2} x_{i_1(c)}^{s_1(c)} x_{i_2(c)}^{s_2(c)} - \frac{1}{\Phi} \Gamma_{i_1(c),1;i_2(c),1} y^{s_1(c)s_2(c)}_{i_1(c)i_2(c)} + \Gamma_{i_1(c),1;i_2(c),2+2s_2(c)} \Gamma_{i_1(c),2+2s_1(c);i_2(c),1} \right|, \\
    \varepsilon_2 &\geq \left| \frac{1}{\Phi^2} x_{i_1(c)}^{s_1(c)} x_{i_2(c)}^{s_2(c)} - \frac{1}{\Phi^2} y^{s_1(c)s_2(c)}_{i_1(c)i_2(c)} \right| - \left| \frac{1}{\Phi} y^{s_1(c)s_2(c)}_{i_1(c)i_2(c)} (\Gamma_{i_1(c),1;i_2(c),1} - \frac{1}{\Phi}) \right| \\
    &\phantom{{}=}- \left| \Gamma_{i_1(c),1;i_2(c),2+2s_2(c)} \Gamma_{i_1(c),2+2s_1(c);i_2(c),1} \right|, \\
    \varepsilon_2 &\geq \frac{1}{\Phi^2} \left| x_{i_1(c)}^{s_1(c)} x_{i_2(c)}^{s_2(c)} - y^{s_1(c)s_2(c)}_{i_1(c)i_2(c)} \right| - \varepsilon_2 - \varepsilon_2^2, \\
    2\varepsilon_2 + \varepsilon_2^2 &\geq \frac{1}{\Phi^2} \left| x_{i_1(c)}^{s_1(c)} x_{i_2(c)}^{s_2(c)} - y^{s_1(c)s_2(c)}_{i_1(c)i_2(c)} \right|,
\end{align*}
and from the last constraints
\begin{align*}
    \varepsilon_2 &\geq \left| \left\langle \gamma_{i_1(c),2+2s_1(c)} \gamma_{i_2(c),2+2s_2(c)} \gamma_{i_3(c),1} \gamma_{i_3(c),2+2s_3(c)} \right\rangle \right|, \\
    \varepsilon_2 &\geq \left| \frac{1}{\Phi^2} y^{s_1(c)s_2(c)}_{i_1(c)i_2(c)} x_{i_3(c)}^{s_3(c)} + \Gamma_{i_1(c),2+2s_1(c);i_3(c),2+2s_3(c)} \Gamma_{i_2(c),2+2s_2(c);i_3(c),1} \right. \\
    &\phantom{{}= \left| \frac{1}{\Phi^2} \right.} \left. - \Gamma_{i_1(c),2+2s_1(c);i_3(c),1} \Gamma_{i_2(c),2+2s_2(c);i_3(c),2+2s_3(c)} \right|, \\
    \varepsilon_2 &\geq \frac{1}{\Phi^2} \left| y^{s_1(c)s_2(c)}_{i_1(c)i_2(c)} x_{i_3(c)}^{s_3(c)} \right| - 2\varepsilon_2, \\
   3\varepsilon_2 &\geq \frac{1}{\Phi^2} \left| y^{s_1(c)s_2(c)}_{i_1(c)i_2(c)} x_{i_3(c)}^{s_3(c)} \right|.
\end{align*}
We, therefore,  have
\begin{align}
    \left|x_{i_1(c)}^{s_1(c)}x_{i_2(c)}^{s_2(c)}x_{i_3(c)}^{s_3(c)}\right|&=\left|y^{s_1(c)s_2(c)}_{i_1(c)i_2(c)}x_{i_3(c)}^{s_3(c)}+x_{i_3(c)}^{s_3(c)}(x_{i_1(c)}^{s_1(c)}x_{i_2(c)}^{s_2(c)}-y^{s_1(c)s_2(c)}_{i_1(c)i_2(c)}x_{i_3(c)}^{s_3(c)}) \right|\\
    \nonumber
    &\leq \left|y^{s_1(c)s_2(c)}_{i_1(c)i_2(c)}x_{i_3(c)}^{s_3(c)}\right|+\left|x_{i_3(c)}^{s_3(c)}(x_{i_1(c)}^{s_1(c)}x_{i_2(c)}^{s_2(c)}- y^{s_1(c)s_2(c)}_{i_1(c)i_2(c)}x_{i_3(c)}^{s_3(c)}) \right|\\
    \nonumber
    &\leq \Phi^2(5\varepsilon_2+\varepsilon_2^2)\eqqcolon \varepsilon_l\,.
    \nonumber
\end{align}
For the same reasoning as before, we require
\begin{align}
    \left(\frac{1-\varepsilon_x}{2}\right)^3\geq \varepsilon_l
\end{align}
which holds for $\varepsilon_2\leq 6.5\times 10^{-4}$.
\end{proof}
It is worth noting that, with the help of Lemma~\ref{le:purification}, the previous results easily extend to pure fermionic Gaussian states. 

\subsection{Physicality of the state}\label{sec:physicality}
In this section we show that the task at hand is difficult because the required state needs to be free-fermionic, not because it is hard to find any that fits the statistics in the NO case. For this we can construct a general state which always satisfies the constraints regardless of the 3-SAT instance having a solution or not.
Utilizing convex programming techniques, we derive the state in the Jordan-Wigner basis $\{\ket{0,0},\ket{1,0},\ket{0,1},\ket{1,1}\}$ as 
\begin{align}
    \rho_s =\begin{pmatrix}
          0.4375 &  0.0    &  0.0 &    0.125\\
         0  &   0.3125 & 0.125  & 0\\
        0   & 0.125   & 0.0625  & 0\\
        0.125 &  0    &  0   &   0.1875\\
    \end{pmatrix}
\end{align}
which has the correlation matrix
\begin{align}
    \Gamma(\rho_s) = \frac{1}{2}\begin{pmatrix}
    0 & 0 & 0 & 0 \\
    0 & 0 & 1 & 0 \\
    0 & -1 & 0 & 1 \\
    0 & 0 & -1 & 0\\
    \end{pmatrix},
\end{align}
with 
\begin{align}
\left<\gamma_{1}\gamma_{2}\gamma_{3}\gamma_{4}\right>_{\rho_s}=\frac{1}{4},\quad
    \left<\gamma_j\right>_{\rho_s}=0.
\end{align}
for $j\in [4]$.
Therefore, $\rho=\rho_s^{\otimes \nx}$ satisfies all the constraints from the proof with the six body correlators, as they all vanish. 
In particular, by employing the Jordan-Wigner transformation and leveraging the fact that $\left<\gamma_{\mu}\right>_{\rho_s}=0$ for all $\mu\in[\nx]$, we can demonstrate that for any $i < j \in [\nx]$ and $\alpha, \beta \in [4]$, we have
\begin{align}
    \left<\gamma_{i,\alpha}\gamma_{j,\beta}\right>_{\rho_s^{\otimes \nx}} = 0.
\end{align}
The normalization constant is $\Phi=2$. Therefore,  we also obtain a constant ($\varepsilon_2=0.9\%$) albeit sightly worse error threshold. 

For the four-body correlators, we can equally define a state. We note that the only non trivial constraints are for each clause
 \begin{align}
     \left<\gamma_{i_1(c),1}\gamma_{i_2(c),1}\right>_{T}= \frac{1}{\Phi}.
 \end{align}
 To accommodate this, we define the Gaussian state $\sigma$, for which for each clause
\begin{align}
    \left<\gamma_{i_1(c),1}\gamma_{i_2(c),1}\right>_{\sigma}= \frac{1}{4}
\end{align}
and every other entry vanishes. This is a physical state, since every index is present in at most $4$ clauses by design.
The final constructed state then takes the form
\begin{align}
    \rho=a_1\rho_S^{\otimes \nx}+a_2 \sigma .
\end{align}
Using the clause constraint, we obtain that $\frac{a_2}{4}=\frac{1}{\Phi}$ and, therefore,  $a_1=1-\frac{4}{\Phi}$. This means that $\rho_S$ needs to be modified to have the expectation values
\begin{align}
    \Gamma(\rho_s) = \frac{1}{\Phi a_1}\begin{pmatrix}
    0 & 0 & 0 & 0 \\
    0 & 0 & 1 & 0 \\
    0 & -1 & 0 & 1 \\
    0 & 0 & -1 & 0\\
    \end{pmatrix},
\end{align}
and
\begin{align}
    \left<\gamma_{1}\gamma_{2}\gamma_{3}\gamma_{4}\right>_{\rho_s}=\frac{1}{\Phi^2 a_1}.
\end{align}
Using again convex programming, this has a feasible state for the original normalization constant $\Phi=\frac{9+\sqrt{5}}{2}$
with 
\begin{align}
    \rho_s \simeq\begin{pmatrix}
          0.432 &  0.0    &  0.0 &    0.1545\\
         0  &   0.377 & 0.1545  & 0\\
        0   & 0.1545   & 0.068  & 0\\
        0.1545 &  0    &  0   &   0.123\\
    \end{pmatrix}\,.
\end{align}
This means that correlation matrices constructed in the original proof already correspond to physical states.

\section*{Acknowledgments}
We thank Tommaso Guaita, Alexander Nietner, Daniel Liang and Salvatore F.E. Oliviero for useful discussions. LB has been funded by  DFG (FOR 2724, CRC 183), BMWK (EniQmA), the Cluster of Excellence MATH+ and the BMBF (MuniQC-Atoms). AAM and LL have been funded by the BMBF (FermiQP). JE has additionally been funded by the ERC (DebuQC).

\bibliographystyle{halpha}
\bibliography{ref}
\let\oldaddcontentsline\addcontentsline
\renewcommand{\addcontentsline}[3]{}
\medskip

\end{document}